        \ttfamily\linespread{1.15}\normalsize%
\newcommand{\ifempty}[3]{%
  \ifthenelse{\isempty{#1}}{#2}{#3}%
}
\newcommand{\citeCoqLink}[1]{\href{\citeCoqRoot/#1}{\raisebox{-3pt}{\includegraphics[height=2.75ex]{}}}}
\newcommand{\citeCoq}[2][]{\textup{\textbf{(}}\ifempty{#2}{\!}{\citeCoqLink{#2}}\ifempty{#1}{}{\;\textup{\lstinline[language=C]{#1}}}\textup{\textbf{)}}}
\newcommand{\Eg}{E.g.\@\xspace}
\newcommand{\eg}{e.g.\@\xspace}
\newcommand{\ie}{i.e.\@\xspace}
\newcommand{\wrt}{w.r.t.\@\xspace}
\newcommand{\qedex}{\ensuremath{\diamond}}
\newcommand{\keyterm}[1]{\textbf{\textit{#1}}}
\newcommand{\listEmpty}{\mbox{\ensuremath{[\,]}}}
\newcommand{\listCons}[2]{\mbox{\ensuremath{{#1}::{#2}}}}
\newcommand{\listConcat}{\,}
\newcommand{\app}[2]{{#1} \listConcat {#2}}
\newcommand{\natlists}{\mathbb{L}}
\newcommand{\ltlists}{\natlists_<}
\newcommand{\gelists}[1]{\natlists_\ge^#1}
\newcommand{\listL}[1][]{\vec{s}_{#1}}
\newcommand{\listLi}[1][]{\vec{s'}_{#1}}
\newcommand{\listT}[1][]{\vec{t}_{#1}}
\newcommand{\listTi}[1][]{\vec{t'}_{#1}}
\newcommand{\listG}[1][]{\vec{g}_{#1}} 
\newcommand{\listA}[1][]{\vec{a}_{#1}} 
\newcommand{\gen}[1][]{\listA[{#1}]}
\newcommand{\gaps}[1]{\vec{g}_{#1}}
\newcommand{\N}{\mathbb{N}}
\newcommand{\Z}{\mathbb{Z}}
\newcommand{\irule}[2]{\dfrac{#1}{#2}}
\newcommand{\setenum}[1]{\{#1\}}
\newcommand{\setcomp}[2]{\left\{{#1} \,\middle|\, {#2}\right\}}
\newcommand{\ran}[1]{\operatorname{Im}\ifempty{#1}{}{({#1})}}
\newcommand{\eqdef}{\triangleq}
\newcommand{\findgap}[2][]{f_{#1}\ifempty{#2}{}{({#2})}}
\newcommand{\findmod}[3]{\mathrm{fm}_{#1}(#2, #3)}
\newcommand{\rep}[2]{[{#1}]^{#2}}
\renewcommand{\next}[2][]{\mathsf{next}_{#1}\ifempty{#2}{}{({#2})}}
\newcommand{\nh}[2]{\mathsf{genL}_{#1}\ifempty{#2}{}{({#2})}}
\newcommand{\nhs}[2]{\mathsf{genM}_{#1}\ifempty{#2}{}{({#2})}}
\newcommand{\apery}[2]{\mathsf{Ap}(#1, #2)}
\newcommand{\aperytwo}[2]{\mathsf{Ap2}(#1, #2)}
\newcommand{\length}[1]{|{#1}|}
\newcommand{\genlength}{m}
\newcommand{\ml}[2]{|{#2}|_{\geq {#1}}}
\newcommand{\mls}[2]{|{#2}|_{\ge 0, \dots, #1}}
\newcommand{\all}[3]{A^{#1}_{#2}(#3)}
\newcommand{\fsucc}[1]{S_{#1}}
\spnewtheorem*{notation}{Notation}{\itshape}{\rmfamily}
\begin{document}

\title{\mbox{\hspace{-14pt}Certified algorithms for numerical semigroups in Rocq}}

\author{Massimo Bartoletti\inst{1} \and
Stefano Bonzio\inst{1} \and
Marco Ferrara\inst{1}}
\institute{Universit\`a degli Studi di Cagliari, Italy}

\maketitle

\begin{abstract}
A numerical semigroup is a co-finite submonoid of the monoid of non-negative integers under addition.
Many properties of numerical semigroups rely on some fundamental invariants, such as, among others, the set of gaps (and its cardinality), the Apéry set or the Frobenius number.
Algorithms for calculating invariants are currently based on computational tools, such as GAP, which lack proofs (either formal or informal) of their correctness.
In this paper we introduce a Rocq formalization of numerical semigroups. Given the semigroup generators, we provide certified algorithms for computing some of the fundamental invariants:
the set of gaps, of small elements, the Apéry set, the multiplicity, the conductor and the Frobenius number.  
To the best of our knowledge this is the first formalization of numerical semigroups in any proof assistant.
\end{abstract}

\keywords{numerical semigroups, Coq/Rocq, verified theory formalization}

\section{Introduction}

A numerical semigroup is a co-finite additive submonoid of the monoid of the natural numbers $\mathbb{N}$.
The main algebraic reason for the investigation of this class of structures traces back to the so-called ``Frobenius problem'', namely finding the largest $b\in\mathbb{N}$ such that a Diophantine equation $a_{1}x_{1} + \dots + a_{n}x_{n} = b$, with $a_{1},\dots, a_{n}$ coprime natural numbers, has no positive integer solution~\cite{FrobProblem}.
Nowadays numerical semigroups play a key role in commutative algebra and algebraic geometry.
For example, they provide much information about rings of coordinates originated from algebras of polynomials over a field \cite{Barucci}, and (free) numerical semigroups are associated to singularities of certain planar curves \cite{Garcia1982,Campillo}.  
More recently, numerical semigroups have also been fruitfully applied to cryptography and coding theory \cite{PedroCodes}. 

Given a numerical semigroup $M$, the (finite) set $\mathbb{N}\setminus M$ is called the set of \emph{gaps} of $M$; every numerical semigroup is univocally determined by the set of its gaps. Moreover, every numerical semigroup is finitely generated \cite[Proposition 3]{numsgps}. One of the advantages of computer-aided approaches to algebraic structures, in general, relies on the possibility of producing (counter)examples on whose basis conjectures can be formulated, checked or disproved: since numerical semigroups are finitely generated, examples can be obtained by algorithms that produce a numerical semigroup starting from a (finite) set of generators. 
One natural way to do that --- which we implement in Rocq --- is by calculating the set of gaps starting from the generators. The inverse problem, namely to compute a (finite) set of generators starting from a given numerical semigroup, turns out to be interesting as well. To this end, we will focus on computing the Apéry set of a numerical semigroup, which is an invariant carrying more information than the set of minimal generators~\cite{numsgps}.

The study of numerical semigroups, as well as the current development of the related theory, strongly relies on the aid of computational tools: the most popular of which, in the algebraic community, is the GAP \cite{GAP} package $\textsf{numericalsgps}$ \cite{GAPPackageNumSem}. 
Algorithmic methods are widely adopted in the theory of numerical semigroups (see, \eg, \cite{Counting,Oversemigroups,Fundamentalgaps}): for example, the work \cite{algonumsgps} studies various algorithms for generalized numerical semigroups. 
However, the existing literature lacks, on the one side, formal proofs of the correctness of the developed algorithms 
and, on the other, the implementation of certified algorithms via the use of proof assistants. The main motivation of our work is to fill this gap by introducing the first (to the best of our knowledge) formalization of numerical semigroups in a proof assistant. 
More specifically, we develop certified algorithms in Rocq that compute: \begin{enumerate}

\item the multiplicity of a numerical semigroup (\Cref{sec:miltiplicity});

\item the Apéry set of a numerical semigroup (\Cref{sec: Apery});

\item a numerical semigroup from a (finite) set of generators (\Cref{sec:gen-to-small-elements}). 

\end{enumerate}

Besides mechanically proving the correctness for these algorithms in Rocq, we provide in the paper informal sketches of our proof strategies.
Throughout the paper we provide \citeCoq[links]{/} to the Rocq sources of our formalization. 


\section{Computing numerical semigroups invariants}

The standard definition of numerical semigroup given before, albeit elegant, is not practical from a computational perspective, since it does not provide a finite representation that can be fed as input to an algorithm.
To define a numerical semigroup $M$ in Rocq, we then rely on the finite set $ \N \setminus M$ of \emph{gaps} of~$M$, 
%
which we represent as the list of its elements and refer to as $\gaps{M}$;  
we additionally require this list to be ordered and duplicate-free, so to ensure that each numerical semigroup has a unique representation. 

\begin{notation}
We introduce some notation that will be used throughout the paper.
We denote the empty list by $\listEmpty$, and the list constructed by appending a head element $h$ to a tail $\listT$ as $\listCons{h}{\listT}$ (note that $h \in \N$, while $\listT$ is a list of natural numbers).
\Eg, the expression $\listCons{0}{(\listCons{1}{(\listCons{2}{\listEmpty})})}$ stands for the list of the first three natural numbers.
For brevity, we will often write such list as $[0;1;2]$.
As usual, the concatenation of two lists is denoted by juxtaposition:
for example, $\app{[0;1;2]}{[5;4;8]} = [0;1;2;5;4;8]$.
Given a list $\listL$ and $k,i \in \N$,
we write $\rep{k}{i}$ for the list made of the element $k$ repeated $i$ times,
$\listL[i]$ for the $i$-th element of $\listL$ (starting from zero),
and $\ml{i}{\listL}$ for the number of elements in $\listL$ that are greater than or equal to $i$
(in particular, $\ml{0}{\listL}$ is the length of $\listL$, abbreviated as $\length{\listL}$).
We denote by $\natlists$ the set of lists of natural numbers,
by $\ltlists$ the subset of $\natlists$ containing only the lists that are duplicate-free and sorted in ascending order, and 
by $\gelists{m}$ the subset containing only the lists that are sorted in descending order and whose elements are less than or equal to the natural number $m$.
Note that $\gaps{M} \in \ltlists$.
\end{notation}

The standard literature on numerical semigroups \cite{numsgps,numsgps2} is obviously based on classical logic, thus the excluded middle law for the membership relation tells us 
that in a numerical semigroup $M$, for all $x \in \N$ either $x \in M$ or \mbox{$x \notin M$}. 
Even though excluded middle could be integrated in Rocq by explicitly assuming the classical axiom, this is not really necessary: it suffices to assume decidability for numerical semigroup membership only.
Actually, the definition of numerical semigroups via the list of gaps already includes this hypothesis, since list membership is decidable.

We now recap some key invariants associated to a numerical semigroup $M$:
\begin{itemize}

\item \label{keyterm:multiplicity}
\keyterm{multiplicity}: the minimum non-zero element of $M$;  

\item \label{keyterm:conductor}
\keyterm{Frobenius number}: the maximum element of $\Z \setminus M$; 

\item \keyterm{conductor}: the successor of the Frobenius number;

\item \label{keyterm:small-elements}
\keyterm{small elements}: the elements of $M$ less than or equal to the conductor;

\item \label{keyterm:apery} 
\keyterm{Apéry set}: given an $n \in \N$, they are the elements of $M$ such that $x-n \not\in M$; 


\end{itemize}

We define algorithms for computing all of these invariants, given the gaps list $\gaps{M}$ representing a numerical semigroup $M$.
Some of these algorithms are relatively simple and are briefly explained here, while the others are examined in greater depth in the rest of the paper.
We implement the conductor as follows: if the list of gaps $\gaps{M}$ is not empty then the conductor is the maximum element of $\gaps{M}$ plus one; otherwise, the conductor is $0$.
In our Rocq implementation we preferred not to work with integers but only with natural numbers; for this reason, we do not explicitly provide a function calculating the Frobenius number.
Note, however, that all the relevant results relying on the Frobenius (see \eg \cite{numsgps2}) can also be expressed via the conductor.
We compute the small elements of $M$ as the list obtained by only keeping the elements that belong to $M$ in the sequence of natural numbers between $0$ and the conductor.

\label{def:generators}
We say that a numerical semigroup $M$ is \keyterm{generated} by a set $A \subseteq \N$ if every element of $M$ can be written as a linear combination of elements of $A$, \ie for all $d \in M$ there exist $a_{1},\dots a_{n}\in A$ and $\lambda_{1},\dots, \lambda_{n}\in \N$ such that $d = \sum_{i=1}^{n}\lambda_{i}a_i$. 
Every numerical semigroup is generated by a \emph{finite} set \cite[Proposition 3]{numsgps}.

\subsection{Computing the multiplicity}
\label{sec:miltiplicity}

We compute the multiplicity by searching for the smallest natural number, starting from $1$, that does \emph{not} belong to the gaps list. 
To this purpose, we introduce a function $\findgap[1]{}$ that, taken as input a list $\listL$, gives as output the smallest integer greater than the last consecutive integers in $\listL$.
For example, $\findgap[1]{[1;2;5]} = 3$.
Technically, it is convenient to parameterise $\findgap[x]{}$ over an arbitrary $x \in \N$, so that $\findgap[x]{\listL}$ outputs $x$ for any list whose first element is different from $x$, and $n+1$ for any list of the form $[x; x+1;\dots; x+ n, y]$ with $y \neq x+(n+1)$. 

\begin{definition}
    \label{def:find_gap}
    \citeCoq[find_gap]{list_nat.v\#L33}
    For all $x \in \N$, let $\findgap[x]{} \colon \natlists \rightarrow \N$ be defined recursively as follows:
    \[
    \findgap[x]{\listEmpty} = x
    \qquad
    \findgap[x]{\listCons{x}{\listT}} = \findgap[x+1]{\listT}
    \qquad
    \findgap[x]{\listCons{h}{\listT}} = x \quad\text{if $h \neq x$}.
    \]
\end{definition}
For example, we have that
$\findgap[1]{[1;2;3;7]} = \findgap[2]{[2;3;7]} = \findgap[3]{[3;7]} = \findgap[4]{[7]} = 4$.

\Cref{th:multiplicity} states that $\findgap[1]{}$ gives the multiplicity of $M$.
For example, consider the numerical semigroup $M$ whose list of gaps is $\gaps{M} = [1;2;3;5;6;9;13]$.
The multiplicity of $M$ is $\findgap[1]{[1;2;3;5;6;9;13]} = 4$.

\begin{theorem}
    \label{th:multiplicity}
    \citeCoq[multiplicity_min]{def.v\#L59}
    Let $M$ be a numerical semigroup represented by the gaps list $\gaps{M}$.
    Then, $\findgap[1]{\gaps{M}}$ is the multiplicity of $M$.
\end{theorem}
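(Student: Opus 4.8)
The plan is to show two things about $m \eqdef \findgap[1]{\gaps{M}}$: first, that $m \in M$, and second, that $m$ is the \emph{smallest} nonzero element of $M$, which together establish that $m$ is the multiplicity. The key observation tying the function to the semigroup is that, since $\gaps{M}$ is exactly the list $\N \setminus M$ sorted in ascending order and duplicate-free (\ie $\gaps{M} \in \ltlists$), membership $x \in M$ is equivalent to $x \notin \gaps{M}$. So it suffices to reason entirely about the list $\gaps{M}$ and the behaviour of $\findgap[x]{}$.

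First I would prove a general lemma about $\findgap[x]{}$ by induction on the list argument $\listL$, for an arbitrary sorted duplicate-free list and arbitrary starting value $x$. The statement I am aiming for is: if $\listL \in \ltlists$, then $\findgap[x]{\listL} \notin \listL$, while every value in the range $\{x, x+1, \dots, \findgap[x]{\listL} - 1\}$ \emph{is} in $\listL$ (assuming those values are at least $x$). In the base case $\listL = \listEmpty$ the output is $x$ and there is nothing below it, so both halves hold vacuously. For the cons case I split on the three defining clauses of \Cref{def:find_gap}: when the head equals $x$ the recursion advances to $x+1$ on the tail, and here I use that sortedness of $\listL$ guarantees the tail is still sorted and all its elements are $\geq x+1$ (so the inductive hypothesis applies cleanly); when the head differs from $x$, sortedness forces the head to be $> x$, hence $x \notin \listL$ and $\findgap[x]{\listL} = x$ is the witness, with the ``everything below is present'' half again vacuous.

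Applying this lemma with $x = 1$ gives immediately that $m = \findgap[1]{\gaps{M}} \notin \gaps{M}$, so $m \in M$, and that every integer in $\{1, \dots, m-1\}$ lies in $\gaps{M}$, so none of them is in $M$. Since $m \geq 1$ by construction (the recursion only ever increases the parameter, starting from $1$), $m$ is a nonzero element of $M$ with no smaller nonzero element, \ie $m$ is the minimum nonzero element, which is by definition the multiplicity.

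I expect the main obstacle to be the bookkeeping in the inductive lemma, specifically keeping the two conjuncts correctly synchronised with the shifting parameter $x$ across the recursive call. The subtlety is that the hypothesis ``all elements of $\listL$ are $\geq x$'' must be maintained as an invariant: it is this sortedness-derived fact that rules out the list skipping past $x$ undetected and simultaneously justifies passing to $x+1$ in the matching clause. Getting the quantifier over the intermediate range $\{x, \dots, m-1\}$ to line up with the induction — rather than stating it over a fixed range — is where the formal proof will require the most care, though each individual step is routine given the invariant.
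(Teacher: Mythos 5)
Your proposal is correct and follows essentially the same route as the paper: the paper likewise reduces the theorem to list-induction lemmas about $\findgap[x]{}$ with the shifting parameter $x$, namely $x \le \findgap[x]{\listL}$, $\findgap[x]{\listL} \notin \listL$ (under the invariant that all elements of $\listL$ are $\ge x$), and $x \le n < \findgap[x]{\listL} \rightarrow n \in \listL$. The only difference is that you bundle the last two facts into a single conjunctive lemma while the paper keeps them as separate auxiliary results, which is a packaging choice rather than a different argument.
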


\begin{proof}
We have to show that $\findgap[1]{\gaps{M}}$ is the minimum non-zero element of $M$.
We start by proving that $\findgap[1]{\gaps{M}} \neq 0$.
This is a special case of the lemma:
\begin{equation}
    \label{eq:find_gap_le}
    \forall x \in \N, \;\;
    \forall \listL \in \natlists, \;\;
    x \le \findgap[x]{\listL}
    \tag*{\citeCoq[find_gap_le]{list_nat.v\#L40}}
\end{equation}
We prove this by induction on the structure of $\listL$.
For the base case $\listL = \listEmpty$, we have that $\findgap[x]{\listL} = x \ge x$.
For the inductive case, let $\listL = \listCons{h}{\listT}$.
There are two subcases.
If $h \neq x$, then $\findgap[x]{\listCons{h}{\listT}} = x \ge x$.
Otherwise, if $h = x$, then $\findgap[x]{\listCons{h}{\listT}} = \findgap[{x+1}]{\listT}$ and by inductive hypothesis we have
\(
x \le x+1 \le \findgap[{x+1}]{\listT}
\).
\hfill\qedex

\medskip\noindent
By applying \ref{eq:find_gap_le}
for $\listL = \gaps{M}$ and $x=1$, we obtain $\findgap[1]{\gaps{M}} \neq 0$.

\medskip
We now prove that $\findgap[1]{\gaps{M}} \in M$, 
\ie $\findgap[1]{\gaps{M}} \notin \gaps{M}$.
This is a special case of a more general auxiliary result:
\begin{equation}
    \label{eq:find_gap_notin}
    \forall x \in \N, \;\;
    \forall \listL \in \ltlists, \;\;
    \ml{x}{\listL} = \length{\listL}
    \; \rightarrow \;
    f_x(\listL) \notin \listL
    \tag*{\citeCoq[find_gap_notin]{list_nat.v\#L59}}
\end{equation}

\noindent
We prove this by induction on the structure of $\listL$.
For the base case $\listL = \listEmpty$, the thesis holds trivially.
For the inductive case $\listL = \listCons{h}{\listT}$, there are two subcases:
\begin{itemize}

\item $h = x$. 
Then, $\findgap[x]{\listCons{h}{\listT}} = \findgap[{x+1}]{\listT}$.
Suppose by contradiction that $\findgap[{x+1}]{\listT} \in \listCons{h}{\listT}$.
If $f_{x+1}(\listT) = h$ then by \ref{eq:find_gap_le}
we have $x+1 \le f_{x+1}(\listT) = h = x$ --- contradiction.
Thus, we must have $\findgap[{x+1}]{\listT} \in \listT$.
Note that every element of $\listT$ is greater than or equal to $x+1$, since $\listCons{h}{\listT}$ is sorted and duplicate-free, and $h = x$. 
Therefore, by the inductive hypothesis it follows that $\findgap[x+1]{\listT} \not\in \listT$ --- contradiction. 
    
\item  $h \neq x$. Then, $\findgap[x]{\listCons{h}{\listT}} = x$.
Assume by contradiction that $x \in \listCons{h}{\listT}$.
Then we must have $x \in \listT$.
By hypothesis, every element of $\listCons{h}{\listT}$ is greater than or equal to $x$, hence $h \geq x$.
Moreover, since $\listCons{h}{\listT}$ is ordered and duplicate-free, and since $x \in \listT$, we have that $h < x$ --- contradiction.
\hfill\qedex
\end{itemize}

\noindent
We now apply \ref{eq:find_gap_notin}
on $\listL = \gaps{M}$ and $x = 1$.
Indeed, $\gaps{M} \in \ltlists$ and its elements are greater than or equal to $1$, hence $\ml{1}{\gaps{M}} = \length{\gaps{M}}$.
Therefore, $\findgap[1]{\gaps{M}} \notin \gaps{M}$, which means that $\findgap[1]{\gaps{M}} \in M$.

\medskip
To conclude, we have to show that $\findgap[1]{\gaps{M}}$ is the minimum non-zero element of $M$, \ie for all $n \in M \setminus \setenum{0}$, it must be
$\findgap[1]{\gaps{M}} \leq n$.
Again, we obtain this as a special case of a more general lemma: 
\begin{equation}
    \label{eq:find_gap_le_lt_in}
    \forall x,n \in \N, \;\; 
    \forall \listL \in \ltlists, \;\;
    x \le n < \findgap[x]{\listL}
    \; \rightarrow \; 
    n \in \listL
    \tag*{\citeCoq[find_gap_le_lt_in]{list_nat.v\#L47}}
\end{equation}

\noindent
We prove this by induction on the structure of $\listL$.
For the base case $\listL = \listEmpty$, we have that $\findgap[x]{\listL} = x$ and there is no such $n$.
For the inductive case, let $\listL = \listCons{h}{\listT}$
and let $n$ such that $x \le n < \findgap[x]{\listCons{h}{\listT}}$.
If $h = n$, the thesis $n \in \listCons{h}{\listT}$ holds trivially.
Otherwise, if $h \neq n$, there are two subcases.
If $h \neq x$ then $\findgap[x]{\listCons{h}{\listT}} = x$ and the thesis holds trivially since the hypothesis $x \le n < x$ is false.
Otherwise, if $h = x$, then $\findgap[x]{\listCons{h}{\listT}} = \findgap[{x+1}]{\listT}$.
Since $x \neq n$, then $x+1 \le n < \findgap[{x+1}]{\listT}$, and so $n \in t$ follows by the inductive hypothesis.
\hfill\qedex

\medskip\noindent
We now apply \ref{eq:find_gap_le_lt_in}
with $\listL = \gaps{M}$ and $x=1$.
The hypothesis $n \in M \setminus \setenum{0}$ means that $n \not\in \gaps{M}$, hence by the contrapositive we obtain that 
$1 \not \leq n$ or $n \not< \findgap[1]{\gaps{M}}$.
The first disjunct is impossible since $n \neq 0$, hence it must be
$n \geq \findgap[1]{\gaps{M}}$ --- which gives the thesis.
\qed
\end{proof}

\subsection{Computing the Apéry set}\label{sec: Apery}

We present two algorithms for computing the Apéry set of a numerical semigroup $M$ given the list of its gaps $\gaps{M}$. Moreover, we provide a formalization of a key  theoretical result, \ie that the Apéry set of $M$ is a generator of $M$ (\Cref{th:apery_generates}).

Recall that the Apéry set of a numerical semigroup $M$ \wrt $n$ is defined as:
\[
    \Set{x \in M | x - n \notin M},
\]
where $x - n$ is the difference in $\Z$.
In particular, if $x < n$ then $x - n$ is negative and so it does not belong to $M$.
Since we prefer our Rocq formalization to work with natural numbers only, we instead use truncated subtraction, \ie $x - n = 0$ when $x < n$, and thus we would have $x - n = 0 \in M$.
In order to reflect this difference, we amend the definition of Apéry set as follows.
\begin{equation}
    \label{eq:apery-nat}
    \Set{x \in M | n \le x \; \rightarrow \; x - n \notin M}.
\end{equation}

As for the other invariants, we represent the Apéry set as a list of natural numbers. 
\Cref{def:apery} gives an algorithm to compute this list from the list of gaps $\gaps{M}$ representing $M$.
\Cref{th:apery_spec} establishes the correctness of this algorithm.

\begin{definition}
    \label{def:apery}
    \citeCoq[apery]{apery.v\#L18}
    Let $\listG \in \natlists$ and $n \in \N$. Let $\listL = [0;\ldots;n-1]$, and let $\listT$ be the list obtained by adding $n$ to every element of $\listG$.
    We define $\apery{\listG}{n}$ as the list obtained by removing the elements in $\listG$ from the list $\app{\listL}{\listT}$.
\end{definition}

\begin{theorem}
    \label{th:apery_spec}
    \citeCoq[apery_spec]{apery.v\#L42}
    Let $M$ be a numerical semigroup with gaps list $\gaps{M}$ and let $n \in \N$.
    Then, $\apery{\gaps{M}}{n}$ coincides with the Apéry set of $M$ with respect to $n$.
\end{theorem}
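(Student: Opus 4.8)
The plan is to prove the set equality pointwise: for every $x \in \N$ I show that $x \in \apery{\gaps{M}}{n}$ holds if and only if $x$ satisfies the amended Apéry condition \eqref{eq:apery-nat}, namely $x \in M$ together with $(n \le x \rightarrow x - n \notin M)$. Throughout I use the defining equivalence $x \in M \iff x \notin \gaps{M}$.

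First I would turn \Cref{def:apery} into a membership criterion. Since $\apery{\gaps{M}}{n}$ is obtained by deleting every element of $\gaps{M}$ from $\app{\listL}{\listT}$, we have
\[
x \in \apery{\gaps{M}}{n}
\iff
\bigl(x \in \listL \,\lor\, x \in \listT\bigr) \,\land\, x \notin \gaps{M},
\]
where $\listL = [0;\ldots;n-1]$ and $\listT$ is the list $\gaps{M}$ shifted up by $n$. Two small list lemmas pin down the disjuncts: $x \in \listL \iff x < n$, and $x \in \listT \iff (n \le x \,\land\, x - n \in \gaps{M})$. The second is where the shift is handled, since every entry of $\listT$ has the form $g + n \ge n$, so membership forces $x \ge n$ and, for $x$ in this range, $x - n \in \gaps{M}$ with ordinary subtraction.

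Next I would split on whether $x < n$ or $n \le x$, which is precisely the split that makes the truncated subtraction in \eqref{eq:apery-nat} behave predictably. If $x < n$, then $x \in \listL$ while $x \notin \listT$, so the criterion collapses to $x \notin \gaps{M}$, \ie $x \in M$; on the other side the guard $n \le x$ is false, so the implication is vacuously true and \eqref{eq:apery-nat} also reduces to $x \in M$. If instead $n \le x$, then $x \notin \listL$, so the criterion requires $x \in \listT$, \ie $x - n \in \gaps{M}$ (equivalently $x - n \notin M$), together with $x \notin \gaps{M}$ (equivalently $x \in M$); on the other side the guard now holds, so \eqref{eq:apery-nat} becomes exactly $x \in M \,\land\, x - n \notin M$. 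In both cases the two conditions coincide, which proves the theorem.

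Each case is a one-line unfolding, so I expect the genuine work to lie entirely in the supporting list lemmas, and above all in the characterization of membership in the shifted list $\listT$: it must correctly convert the shift by $n$ into ordinary subtraction and ensure that no entry of $\listT$ can fall inside the initial block $\listL$. The only delicate bookkeeping is at the boundary $x = n$, where truncated and ordinary subtraction must be reconciled; once that is settled, everything else follows mechanically from the case split.
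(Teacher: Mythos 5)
Your proposal is correct and follows essentially the same route as the paper's proof: unfold \Cref{def:apery} into membership in $\app{\listL}{\listT}$ minus $\gaps{M}$, characterize membership in $\listL$ as $x < n$ and in the shifted list $\listT$ as $x = y + n$ with $y \in \gaps{M}$, and match these against the two branches of the guard in \eqref{eq:apery-nat}. Your presentation as a single biconditional with a uniform case split on $x < n$ versus $n \le x$ is only an organizational variant of the paper's two-directional argument.
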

\begin{proof}
    Let $x \in \apery{\gaps{M}}{n}$.
    By~\Cref{def:apery}, we have $x \notin \gaps{M}$ and $x \in \app{\listL}{\listT}$, where \mbox{$\listL = [0;\dots; n-1]$} and $\listT$ is obtained by adding $n$ to every element of $\gaps{M}$.
    %
    If $x \in \listL$ then $x < n$, and so the implication in~\eqref{eq:apery-nat} holds trivially.
    If $x \in \listT$ then $x = y + n$ with $y \in \gaps{M}$.
    Hence, $x - n = y \not\in M$ since $\gaps{M}$ is the list of gaps,
    and again the condition in~\eqref{eq:apery-nat} is true.
    Suppose now that $x \in \Set{x \in M | n \le x \rightarrow x - n \notin M}$.
    If $n \le x$ then $x-n \notin M$, and thus $x-n \in \gaps{M}$.
    We can write $x$ as $x = (x-n) + n$, and thus $x \in \listT$.
    Otherwise, if $x < n$, then clearly $x \in \listL$.
    \qed
\end{proof}

\begin{example}
    \label{ex:apery}
    Let $\gaps{M} = [1;2;3;5;6;9;13]$ be the gaps list representing the numerical semigroup $M$ generated by $\setenum{4;7;10}$.
    Following \Cref{def:apery}, for $n = 4$ we have $\listL = [0;1;2;3]$ and $\listT = [5;6;7;9;10;13;17]$.
    By removing the elements of $\gaps{M}$ from the list $\app{\listL}{\listT}$, we obtain $\apery{\gaps{M}}{n} = [0;7;10;17]$.
    \hfill\qedex
\end{example}

We present in~\Cref{th:apery_2} an alternative algorithm for computing the Apéry set in the special case $n \in M \setminus \setenum{0}$.
\Cref{th:apery_2_correct} proves the correctness of this algorithm.

\begin{definition}
    \label{th:apery_2}
    \citeCoq[apery_2]{apery.v\#L340}
    Let $\listL \in \natlists$ and let $n, a \in \N$.
    We define the function $\findmod{\listL}{n}{a}$ as follows:
    \begin{itemize}
        \item $\findmod{\listL}{n}{a}$ is the first element $x$ in $\listL$ such that $x \equiv_n a$, if such element exists;
        \item otherwise, $\findmod{\listL}{n}{a}$ is the smallest natural number $x$ that is greater than the last element of $\listL$ and such that $x \equiv_n a$.
    \end{itemize}
    We define the list $\aperytwo{\listL}{n}$ as $[\findmod{\listL}{n}{0}; \dots; \findmod{\listL}{n}{n-1}]$.
\end{definition}

\begin{theorem}
    \label{th:apery_2_correct}
    \citeCoq[apery_2_correct]{apery.v\#L342}
    Let $M$ be a numerical semigroup and let $n \in M\setminus\{0\}$.
    Let $\listL$ be the list of small elements of $M$.
    Then $\aperytwo{\listL}{n}$ coincides with the Apéry set of $M$ with respect to $n$.
\end{theorem}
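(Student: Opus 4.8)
The plan is to reduce the statement to the classical characterization of the Apéry set as the collection of residue-class minima, and then to verify that $\findmod{\listL}{n}{a}$ computes exactly such a minimum for each residue $a$. Concretely, I would first prove the auxiliary fact that, for $n \in M \setminus \setenum{0}$, an element $x \in M$ belongs to the Apéry set~\eqref{eq:apery-nat} if and only if $x$ is the least element of $M$ in its residue class modulo $n$. This reformulation is what makes the connection to $\findmod$ transparent, since by Definition~\ref{th:apery_2} the list $\aperytwo{\listL}{n}$ is built by selecting, for each $a \in \setenum{0, \dots, n-1}$, one element congruent to $a$ modulo $n$.

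For the characterization lemma I would argue both directions. If $x$ is the least element of $M$ in its class, then either $x < n$, in which case the implication in~\eqref{eq:apery-nat} holds vacuously, or $x \ge n$, in which case $x - n$ is a strictly smaller element of the same class and hence cannot lie in $M$ by minimality, so the Apéry condition holds. Conversely, if $x \in M$ satisfies the Apéry condition but were not minimal in its class, there would be some $y \in M$ with $y \equiv_n x$ and $y < x$; then $x \ge n$ and $x - n = y + k n$ for some $k \in \N$, and since $n \in M$ and $M$ is closed under addition we get $x - n \in M$, contradicting~\eqref{eq:apery-nat}. This backward direction is the one place where the hypothesis $n \in M$ is genuinely used. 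A side consequence I would record is that each residue class contains exactly one Apéry element, so that the set and the list $\aperytwo{\listL}{n}$ can be compared entry by entry over $a \in \setenum{0, \dots, n-1}$.

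It then remains to show that $\findmod{\listL}{n}{a}$ equals the least element of $M$ in class $a$, following the two cases of Definition~\ref{th:apery_2}. Here I would use that $\listL$ enumerates $M \cap [0, c]$ in ascending order, where $c$ is the conductor, that every integer greater than $c$ lies in $M$, and that $c$ itself is the largest (hence last) element of $\listL$. In the first case, when some small element is congruent to $a$, the first such element of $\listL$ is the least element of $M$ in class $a$, because any smaller element of $M$ in that class would also be at most $c$, hence a small element appearing earlier in $\listL$. In the second case, when no small element is congruent to $a$, all elements of $M$ in class $a$ exceed $c$; since every integer above $c$ belongs to $M$, the least element of $M$ in class $a$ is simply the least integer strictly greater than $c$ that is congruent to $a$, which is exactly the value produced by the second clause.

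I expect the main obstacle to be the second case together with the backward direction of the characterization, since both rest on the interplay between $n \in M$, additive closure, and the behaviour of $M$ at and beyond the conductor. In particular one must check that the truncated-subtraction convention of~\eqref{eq:apery-nat} is handled correctly when $x < n$, and that the list $\aperytwo{\listL}{n}$ contains no repetitions and covers every residue, so that equality of the list and the set really does follow from the pointwise identification of $\findmod{\listL}{n}{a}$ with the corresponding residue-class minimum.
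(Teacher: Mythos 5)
Your proposal is correct and follows essentially the same route as the paper: the paper also reduces the statement to the characterization of the Apéry set as the residue-class minima $w(0),\dots,w(n-1)$ (its Lemma~\ref{th:apery_spec_2}, whose proof it delegates to the literature) combined with the observation that $\findmod{\listL}{n}{a}$ computes exactly that minimum. You merely spell out the two steps that the paper leaves implicit, and your handling of both directions of the characterization and of the two cases of $\findmod{}{}{}$ is sound.
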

Clearly $\findmod{M}{n}{a}$ is the smallest element $x$ of $M$ such that $x \equiv_n a$.
Thus, \Cref{th:apery_2_correct} follows by the next theoretical \namecref{th:apery_spec_2}.

\begin{lemma}
    \citeCoq[apery_spec_2]{apery.v\#L122}
    \label{th:apery_spec_2}
    Let $M$ be a numerical semigroup and let $n \in M \setminus \setenum{0}$.
    For all $i = 0, \dots, n-1$ let $w(i)$ be the smallest element of $M$ such that $w(i) \equiv_n i$.
    Then the Apéry set of $M$ with respect to $n$ is equal to the set:
    \[
        \Set{w(0), \dots, w(n-1)}
    \]
\end{lemma}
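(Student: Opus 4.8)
The plan is to establish the stated set equality by proving the two inclusions separately, using the minimality in the definition of each $w(i)$ as the engine of both directions. First I would record that each $w(i)$ is well defined: since $M$ is co-finite, every residue class modulo $n$ contains arbitrarily large elements of $M$, so $\Set{x \in M | x \equiv_n i}$ is a non-empty subset of $\N$ and hence has a least element. For the inclusion $\Set{w(0), \dots, w(n-1)} \subseteq \apery{M}{n}$, I would fix $i$; then $w(i) \in M$, and to check the Apéry condition in the form \eqref{eq:apery-nat} I assume $n \le w(i)$ and suppose $w(i) - n \in M$. Since $w(i) - n \equiv_n w(i) \equiv_n i$ and $w(i) - n < w(i)$, this contradicts the minimality of $w(i)$ among the elements of $M$ in class $i$; hence $w(i) - n \notin M$ and $w(i)$ belongs to the Apéry set.

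For the converse inclusion, I would take $x \in \apery{M}{n}$ and set $i = x \bmod n$ (so $0 \le i < n$), so that $x$ is an element of $M$ lying in class $i$ and therefore $w(i) \le x$. Suppose $w(i) < x$. As $x \equiv_n w(i)$, we may write $x - w(i) = kn$ for some $k \ge 1$, whence $x \ge n$ and $x - n = w(i) + (k-1)n$. The key step is to observe that $x - n \in M$: this holds because $w(i) \in M$, $n \in M$, and $M$ is closed under addition, so adjoining $k-1$ copies of $n$ to $w(i)$ keeps us inside $M$. But then $n \le x$ together with $x - n \in M$ contradicts $x \in \apery{M}{n}$. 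Hence $w(i) = x$, so every Apéry element is one of the $w(i)$, which completes the equality.

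The main obstacle is precisely this closure step in the second inclusion, which is the only place where the hypothesis $n \in M \setminus \setenum{0}$ is indispensable: it is what lets us descend from $x$ to $x - n$ while remaining in $M$, by peeling off one copy of $n$. Everything else reduces to the defining minimality of $w(i)$ and elementary residue arithmetic modulo $n$. Note that distinctness of the $w(i)$ is not needed for the set equality, although it is immediate, since distinct indices in $0, \dots, n-1$ cannot be congruent modulo $n$.
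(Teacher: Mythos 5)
Your proof is correct and follows essentially the same route as the paper, which defers to the standard argument in the cited reference: both inclusions are driven by the minimality of $w(i)$, with the closure of $M$ under adding copies of $n$ supplying the contradiction in the converse direction. Your handling of the amended Apéry condition \eqref{eq:apery-nat} (assuming $n \le x$ before subtracting) matches the paper's natural-number convention, so nothing is missing.
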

We do not present the proof of this \namecref{th:apery_spec_2} here, since it is essentially a formalization of the (informal) proof given in \cite{numsgps}.

\begin{example}
    Using this second algorithm, we compute the Apéry set relative to $4$ of the same numerical semigroup $M$ presented in \Cref{ex:apery}.
    Given the list of gaps $\gaps{M} = [1;2;3;5;6;9;13]$, the list of small elements of $M$ is $\listL = [0;4;7;8;10;11;12;14]$.
    The smallest element of $\listL$ congruent to $0$ modulo $4$ is $\findmod{\listL}{4}{0} = 0$.
    Similarly, we have $\findmod{\listL}{4}{2} = 10$ and $\findmod{\listL}{4}{3} = 7$.
    No element of $\listL$ is congruent to $1$ modulo $4$, and the smallest natural number after $14$ that satisfies such property is $\findmod{\listL}{4}{1} = 17$.
    Thus, $\aperytwo{\listL}{4} = [0;17;10;7]$.
\end{example}

\begin{theorem}
    \label{th:apery_generates}
    \citeCoq[apery_generates]{apery.v\#L220}
    Let $M$ be a numerical semigroup with gaps list $\gaps{M}$ and let $n \in M\setminus\{0\}$.
    For all $a \in M$ there exists a unique $(k, w) \in \N \times \apery{\gaps{M}}{n}$ such that $a = kn+w$.
\end{theorem}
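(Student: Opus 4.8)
The plan is to reduce the statement to the residue-class description of the Apéry set supplied by the preceding results. By \Cref{th:apery_spec}, the list $\apery{\gaps{M}}{n}$ coincides with the Apéry set of $M$ with respect to $n$, and by \Cref{th:apery_spec_2} this set equals $\setenum{w(0), \dots, w(n-1)}$, where $w(i)$ is the smallest element of $M$ satisfying $w(i) \equiv_n i$. The crucial consequence I would extract from these two facts is that $\apery{\gaps{M}}{n}$ contains \emph{exactly one} element in each residue class modulo $n$: the elements $w(0), \dots, w(n-1)$ realize all $n$ distinct residues, and two Apéry elements lying in the same class would both be minimal in $M$ for that class, hence equal. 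This ``one representative per class'' property is the workhorse of both existence and uniqueness.

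For existence, I would take $a \in M$ and let $i$ be its residue modulo $n$, so $0 \le i < n$ and $a \equiv_n i$. Since $a$ itself witnesses an element of $M$ in class $i$, the minimum $w = w(i)$ exists and satisfies $w \le a$. Because $a \equiv_n w$, the difference $a - w$ is a nonnegative multiple of $n$, say $a - w = kn$ with $k \in \N$; then $a = kn + w$ with $w \in \apery{\gaps{M}}{n}$, which gives the required decomposition.

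For uniqueness, I would suppose $a = k_1 n + w_1 = k_2 n + w_2$ with $w_1, w_2 \in \apery{\gaps{M}}{n}$. Reducing modulo $n$ yields $w_1 \equiv_n a \equiv_n w_2$, so $w_1$ and $w_2$ lie in the same residue class; by the one-representative-per-class property above, $w_1 = w_2$. Cancelling $n$ (which is nonzero because $n \in M \setminus \setenum{0}$) then forces $k_1 = k_2$, so the pair $(k, w)$ is unique.

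I expect the main obstacle to be bookkeeping rather than conceptual depth, since the genuine mathematical content is packaged inside \Cref{th:apery_spec_2}, which I may assume. In the Rocq development the delicate points are the arithmetic on natural numbers with truncated subtraction: I must establish $w \le a$ \emph{before} forming $a - w$, and must justify from the congruence $a \equiv_n w$ that $a - w$ is an exact multiple of $n$ rather than merely a natural number; symmetrically, the cancellation of $n$ in the uniqueness step has to be discharged from $n \neq 0$. These are the steps where an informal ``clearly'' hides the actual formalization effort.
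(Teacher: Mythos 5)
Your proof is correct, and since the paper states \Cref{th:apery_generates} without a written proof, the natural comparison is with the machinery it sets up: your route through \Cref{th:apery_spec} and the residue-class characterization of \Cref{th:apery_spec_2} (one Apéry element per class modulo $n$, existence via minimality of $w(i)$ in the class of $a$, uniqueness via congruence plus cancellation of $n \neq 0$) is exactly the standard argument those lemmas are designed to support. Your closing remarks about truncated subtraction and establishing $w \le a$ before forming $a - w$ correctly identify where the formalization effort actually lies.
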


It follows from \Cref{th:apery_generates} that $\apery{\gaps{M}}{n} \cup \{n\}$ is a (finite) set of generators for $M$.

\subsection{Equivalent definitions}

We recall in~\Cref{th:numerical_semigroup_2} some equivalent definitions of numerical semigroups (see \cite[Proposition 1]{numsgps} and its proof). For the reader's convenience we include (part of) the proof adapted to our Rocq formalization of numerical semigroups.

\begin{theorem}
\label{th:numerical_semigroup_2}
\citeCoq[numerical_semigroup_2]{def.v\#L109}
Let $M$ be an additive submonoid of $\N$ with a decidable membership predicate.
Then, the following statements are equivalent: 
\begin{enumerate}
    \item $M$ is a numerical semigroup (\ie, $\mathbb{N}\setminus M$ is finite); 
    \item  $1\in G$, where $G$ is the additive subgroup of $\mathbb{Z}$ generated by $M$; 
    \item there exists an element $a \in M$ such that $a+1 \in M$.
\end{enumerate}
\end{theorem}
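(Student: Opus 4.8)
The plan is to reduce the three-way equivalence to two pieces: an easy equivalence $(2) \Leftrightarrow (3)$, obtained once the subgroup $G$ is described concretely, and the equivalence $(1) \Leftrightarrow (3)$, whose forward direction is routine and whose converse $(3) \Rightarrow (1)$ carries the real content. Together these give $(1) \Leftrightarrow (3) \Leftrightarrow (2)$.

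First I would give an explicit description of $G$. Since $M$ is a submonoid of $(\N,+)$ (so $0 \in M$ and $M$ is closed under addition), the subgroup of $\Z$ it generates is exactly $G = \setcomp{m - m'}{m, m' \in M}$: this set contains $0$, is closed under negation and, because $M$ is closed under $+$, under addition, and it is the smallest subgroup containing $M$. With this description, $1 \in G$ holds iff $1 = m - m'$ for some $m, m' \in M$, that is, iff there is an element $a = m' \in M$ with $a + 1 = m \in M$. This is precisely statement $(3)$, so $(2) \Leftrightarrow (3)$. For $(1) \Rightarrow (3)$: if $\N \setminus M$ is finite it is bounded by some $N$, so every integer $k > N$ lies in $M$; taking $a = N+1$ gives $a \in M$ and $a+1 \in M$, which is $(3)$. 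Decidability of membership is what lets us turn the finiteness of the complement into such a bound constructively and then locate the consecutive pair.

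The crux is $(3) \Rightarrow (1)$. Assume $a \in M$ and $a+1 \in M$. If $a = 0$ then $1 \in M$, hence $n \in M$ for every $n$ and $M = \N$, so the complement is empty; thus assume $a \ge 1$. Since $M$ is closed under addition and contains $0$, $a$ and $a+1$, it contains every combination $i\,a + j\,(a+1) = (i+j)\,a + j$ with $i, j \in \N$. I would then show that every $m \ge a^2 - a$ is of this form: writing $m = q\,a + r$ by Euclidean division with $0 \le r < a$, the bound $m \ge a(a-1)$ forces the quotient to satisfy $q \ge a-1 \ge r$, so setting $j = r$ and $i = q - r \ge 0$ yields $m = i\,a + j\,(a+1) \in M$. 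Consequently $\N \setminus M \subseteq \setenum{0, \dots, a^2 - a - 1}$, a finite set, giving $(1)$.

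The main obstacle is this combinatorial step in $(3) \Rightarrow (1)$: verifying that the values $(i+j)\,a + j$ realise every sufficiently large integer, i.e. pinning down the threshold $a^2 - a$ and checking the remainder bound $q \ge r$. Everything else is bookkeeping about the monoid and subgroup structure. In the Rocq development I expect the finiteness conclusion to be phrased through the existence of such a bound together with decidable membership, from which the explicit (ordered, duplicate-free) gaps list below the bound can be extracted.
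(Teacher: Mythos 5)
Your proposal is correct and follows essentially the same route as the paper: the same decomposition into $(2)\Leftrightarrow(3)$ via the description $G=\setcomp{x-y}{x,y\in M}$, $(1)\Rightarrow(3)$ via a bound on the complement, and $(3)\Rightarrow(1)$ via a threshold beyond which every natural number lies in $M$, with decidable membership used to extract the finite gaps list. The only difference is cosmetic: you spell out the Euclidean-division argument for the threshold (obtaining the slightly tighter bound $a(a-1)$), whereas the paper cites the analogous fact with bound $(a-1)(a+1)$ from the literature.
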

\begin{proof}
$(2 \Leftrightarrow 3)$ easily follows from the fact that $G = \setcomp{x-y}{x, y \in M}$.

\noindent
$(1 \Rightarrow 3)$. If $M$ is a numerical semigroup with conductor $c$, then $c, c+1 \in M$.

\noindent
$(3 \Rightarrow 1)$.
It is easy to prove (see \cite[Proposition 1]{numsgps}) that for every $n \in \N$, if $n \ge (a-1) (a+1)$ then $n \in M$.
This can be used to build a gaps list for $M$.
Indeed, let $\listL$ be the list of natural numbers from $0$ to $(a-1) (a+1)-1$, and let $\listG$ be the list obtained by removing from $\listL$ the elements that are in $M$.
Then, $\listG$ is sorted in ascending order, it has no duplicates, and contains exactly the elements that are not in $M$.
Thus, $M$ together with $\listG$ is a numerical semigroup.
\qed
\end{proof}


\paragraph{Remark.}
Note that the proof of $(3 \Rightarrow 1)$ is constructive: it not only provides a proof that a numerical semigroup on $M$ exists, but also outputs that numerical semigroup (by computing its list of gaps $\listG$).
The decidability of the membership of $M$ is needed in order to construct the list $\listG$.
More precisely, in order to filter out from $\listL$ the elements not belonging to $M$, we need an algorithm that computes whether a natural number belongs to $M$ or not.
Observe also that condition 3 in the above proposition is much easier to implement in Rocq than condition 2, as it does not require introducing $\Z$ nor any of its (additive) subgroups. 

\section{Computing a numerical semigroup from its generators}
\label{sec:gen-to-small-elements}

In this section we present an algorithm that takes as input a list $\gen$, and gives as output the numerical semigroup $M$ generated by $\gen$, represented by its gaps list $\gaps{M}$.
Our algorithm works as follows: 
\begin{enumerate}
    \item we start by giving an algorithm that generates \emph{multiplicity index lists}, which encode all (and only) the linear combinations of elements in $\gen$ (\Cref{sec:gen});
    \item based on this algorithm, we show how to obtain the actual linear combinations of elements of $\gen$
    (\Cref{sec:gen-to-linear-combinations});
    \item we compute lists $\listL$ of increasing length containing the $\length{\listL}$ smallest elements of $M$ (\Cref{sec:linear-combinations-to-small-elements});
    \item we stop when a list $\listL$ contains $m$ consecutive natural numbers, where $m$ is the minimum non-zero element of $\gen$;
    \item the list of small elements is the list formed by the element of $\listL$ up to $c$, where $c$ is the first of the $m$ consecutive elements of $\listL$ found in the previous step;
    \item finally, we compute the list $\gaps{M}$ of gaps by removing the small elements from the list $[0; \dots; c]$.
\end{enumerate}



\subsection{Generating multiplicity index lists}
\label{sec:gen}


Let $\gen$ be a list of generators. 
Any linear combination of $\gen$ can be represented as a multiset of elements in $\gen$ (thus, not mentioning the coefficients $\lambda_i \in \mathbb{N}$).
For example, given $\gen = [4;7;10]$, we can represent \mbox{$45 = 2\cdot 4 + 1\cdot 7 + 3\cdot 10$} as the multiset $\setenum{4, 4, 7, 10, 10, 10}$, namely $45$ is the sum of its elements.
This multiset, in turn, can be represented as a list $\listL$ whose elements are \emph{indices} of elements in~$\gen$: each index $i$ is repeated as many times as the number of occurrences of $\gen[i]$ in the multiset.
In our example, the index $0$ (corresponding to the generator~$4$) is repeated 2 times, and so on.
In order to have a \emph{unique} representation for each linear combination, we require the list $\listL$ to be sorted in decreasing order.
Therefore, in the previous example, we encode $45$ as $\listL = [2;2;2;1;0;0]$.

Thus, generating linear combinations of $\gen$ is equivalent to generating decreasing lists of natural numbers less than $\length{\gen}$,
\ie lists in the set $\gelists{m}$ where $m = |\gen|-1$.
Such generator is given by the algorithm $\nh{m}{}$ (\Cref{def:lgen}),
which takes as input a natural number $n$, and gives as output the $n$-th multiplicity index list.
This is done recursively, by calling the function $\next[m]{}$ (\Cref{def:next}) on the list obtained at step $n-1$.
The main results in this section are \Cref{th:lgen_complete,th:lgen_correct}, which show that $\nh{m}{}$ generates all, and only, the lists in~$\gelists{m}$. 

\begin{definition}
    \label{def:next}
    \citeCoq[next]{list_alg.v\#L23}
    For all $m \in \N$,
    we define the function $\next[m]{} \colon \natlists \rightarrow \natlists$ 
    by the following inference rules:
    \[
    \begin{array}{c}
	\next[m]{\listEmpty} = [0]
        \qquad\qquad
	\next[m]{\listCons{h}{\listT}} = \listCons{(h+1)}{\listT}
        \;\; \text{if $h < m$}
        \\[10pt]
        \irule
            {\next[m]{\listT} = \listEmpty}
            {\next[m]{\listCons{h}{\listT}} = \listEmpty}
        \;\;\text{if $h \ge m$}
	\qquad
        \irule
            {\next[m]{\listT} = \listCons{x}{\listTi}}
            {\next[m]{\listCons{h}{\listT}} = 
	\listCons{x}{\listCons{x}{\listTi}}}
        \;\;\text{if $h \ge m$}
    \end{array}
    \]
\end{definition}

\noindent
\Eg, we have the following derivation, proving that
$\next[3]{[3;3;1;0]} = [2;2;2;0]$:
\[
\irule
{\irule
  {\irule
    {}
    {\next[3]{1;0} = [2;0]}}
  {\next[3]{[3;1;0]} = [2;2;0]}}
{\next[3]{[3;3;1;0]} = [2;2;2;0]}
\]
Note that $\next[m]{\listT}$ cannot be the empty list, and so the third rule is never used.
However, it is still needed to provide a correct definition of $\next{}$ in Rocq.

Intuitively, $\next[m]{\listL}$ searches the leftmost value in $\listL$ that is (strictly) less than $m$, increases that value by one, and changes all the elements at its left to that same (increased) value.
If all the elements in the list are greater than or equal to $m$, it adds a new $0$ element at the end of the list, and sets all the other elements to $0$.
For example, given $m = 3$, we have the following mappings:
\[
    [3;3;1] 
    \xmapsto{\next[3]{}}     
    [2;2;2]
    \xmapsto{\next[3]{}} 
    [3;2;2] 
    \xmapsto{\next[3]{}} 
    [3;3;2] 
    \xmapsto{\next[3]{}} 
    [3;3;3] 
    \xmapsto{\next[3]{}} 
    [0;0;0;0] 
\]


\noindent
This intuition is formalized by the following~\namecref{lem:next_repeat}.  

\begin{lemma}
    \label{def:next_repeat}
    \label{lem:next_repeat}
    \citeCoq[next_repeat]{list_alg.v\#L40}
    For all $\listL \in \natlists$ and all $m, n, k$ in $\N$ with $k < m$:
    \[
	\next[m]{\app{\rep{m}{n}}{(\listCons{k}{\listL})}}
        = 
        \app{\rep{k+1}{n+1}}{\listL}
    \]
\end{lemma}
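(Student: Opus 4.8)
The plan is to prove the identity by induction on $n$, reading off each step directly from the recursive definition of $\next[m]{}$ (\Cref{def:next}) and using the hypothesis $k < m$ only to trigger the appropriate base-case rule. No case analysis on $\listL$ will be needed, so the argument should go through uniformly for every $\listL \in \natlists$.

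For the base case $n = 0$, the input list $\app{\rep{m}{0}}{(\listCons{k}{\listL})}$ is just $\listCons{k}{\listL}$, whose head $k$ satisfies $k < m$. The second defining rule then gives $\next[m]{\listCons{k}{\listL}} = \listCons{(k+1)}{\listL}$, which is exactly $\app{\rep{k+1}{1}}{\listL}$, matching the right-hand side for $n = 0$.

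For the inductive step I would assume the claim for some $n$ and consider the list for $n+1$, namely $\app{\rep{m}{n+1}}{(\listCons{k}{\listL})} = \listCons{m}{(\app{\rep{m}{n}}{(\listCons{k}{\listL})})}$. Its head is $m$, so $h = m \ge m$ and the outcome is governed by the value of $\next[m]{}$ on the tail $\listT = \app{\rep{m}{n}}{(\listCons{k}{\listL})}$. By the inductive hypothesis, $\next[m]{\listT} = \app{\rep{k+1}{n+1}}{\listL} = \listCons{(k+1)}{(\app{\rep{k+1}{n}}{\listL})}$, which is a non-empty list. Hence the fourth rule applies with $x = k+1$ and $\listTi = \app{\rep{k+1}{n}}{\listL}$, yielding $\next[m]{\listCons{m}{\listT}} = \listCons{(k+1)}{\listCons{(k+1)}{(\app{\rep{k+1}{n}}{\listL})}} = \app{\rep{k+1}{n+2}}{\listL}$, which is precisely the target for $n+1$.

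The only real subtlety — and the step I would be most careful about — is confirming that the fourth rule, rather than the third, governs the inductive step. This hinges on the recursive value $\next[m]{\listT}$ being non-empty, and the inductive hypothesis supplies this directly, since that value has head $k+1$. (This is exactly the phenomenon flagged in the remark after \Cref{def:next}, that $\next[m]{}$ never returns $\listEmpty$ on a non-empty sorted input.) Everywhere past the base case the heads equal $m$, forcing the $h \ge m$ branch, so the hypothesis $k < m$ plays a role only at $n = 0$; this should make the whole induction essentially a direct unfolding.
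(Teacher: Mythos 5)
Your proof is correct and follows essentially the same route as the paper's: induction on $n$, with the base case resolved by the second rule of \Cref{def:next} (using $k<m$) and the inductive step by the fourth rule, where the inductive hypothesis supplies the non-empty value $\listCons{(k+1)}{(\app{\rep{k+1}{n}}{\listL})}$ of $\next[m]{}$ on the tail. Your explicit justification that the fourth rule rather than the third applies is a point the paper leaves implicit, but the argument is the same.
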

\begin{proof}
    By induction on $n$.
    For the base case $n = 0$, since $k < m$, by the second inference rule in~\Cref{def:next} we have that:
    \[
    \next[m]{\rep{m}{0} \listConcat (\listCons{k}{\listL})} 
    \; = \; 
    \next[m]{\listCons{k}{\listL}} 
    \; = \;
    \listCons{(k+1)}{\listL}
    \; = \;
    \rep{k+1}{1} \listConcat \listL
    \]
    For the inductive case, let the induction hypothesis be:
    \begin{equation}
    \label{eq:next_repeat:ih}
    \next[m]{\app{\rep{m}{n}}{(\listCons{k}{\listL})}}
    \; = \; 
    \app{\rep{k+1}{n+1}}{\listL}
    \; = \;
    \listCons{(k+1)}{\app{\rep{k+1}{n}}{\listL}}
    \end{equation}
    Then, by the fourth inference rule in~\Cref{def:next} we have that:
    \begin{align*}
    \next[m]{\app{\rep{m}{n+1}}{(\listCons{k}{\listL})}}
    & = 
    \next[m]{\listCons{m}{(\app{\rep{m}{n}}{(\listCons{k}{\listL})}})}
    && \text{since $\rep{m}{n+1} = \listCons{m}{\rep{m}{n}}$}
    \\
    & = \listCons{(k+1)}{\listCons{(k+1)}{\app{\rep{k+1}{n}}{\listL}}}
    && \text{by~\eqref{eq:next_repeat:ih}}
    \\
    & =
    \app{\rep{k+1}{n+2}}{\listL}
    && \tag*{\qed}
    \end{align*}
\end{proof}

We can now proceed to define our list generating algorithm as the iteration of the function $\next[m]{}$, starting from the empty list.

\begin{definition}
    \label{def:lgen}
    \citeCoq[lgen]{list_alg.v\#L207}
    For all $m \in \N$,
    we define $\nh{m}{} \colon \N \rightarrow \natlists$ as follows:
    \[
    \nh{m}{0} = \listEmpty 
    \qquad
    \nh{m}{x+1} = \next[m]{\nh{m}{x}}
    \]
\end{definition}

\noindent
For illustration, we tabulate below the first 12 values of $\nh{2}{}$:
\label{ex:genL}
\begin{align*}
    \nh{2}{0} &= \listEmpty & \nh{2}{1} &= [0] & \nh{2}{2} &= [1] & \nh{2}{3} &= [2] \\
    \nh{2}{4} &= [0;0] & \nh{2}{5} &= [1;0] & \nh{2}{6} &= [2;0] & \nh{2}{7} &= [1;1] \\
    \nh{2}{8} &= [2;1] & \nh{2}{9} &= [2;2] & \nh{2}{10} &= [0;0;0] & \nh{2}{11} &= [1;0;0]
\end{align*}

We now prove that $\nh{m}{}$ generates exactly the lists in $\gelists{m}$.

\begin{theorem}
    \label{th:lgen_correct}
    \citeCoq[lgen_correct]{list_alg.v\#L230}
    For all $m \in \N$, $\ran{\nh{m}{}} \subseteq \gelists{m}$.
\end{theorem}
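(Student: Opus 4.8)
The plan is to prove the stronger statement that $\nh{m}{x} \in \gelists{m}$ for every $x \in \N$, by induction on $x$; since $\ran{\nh{m}{}} = \setcomp{\nh{m}{x}}{x \in \N}$, this immediately yields the inclusion. The base case is trivial: $\nh{m}{0} = \listEmpty$ is vacuously sorted in descending order with all (zero) elements $\le m$, so $\listEmpty \in \gelists{m}$. For the inductive step, \Cref{def:lgen} gives $\nh{m}{x+1} = \next[m]{\nh{m}{x}}$, and the inductive hypothesis provides $\nh{m}{x} \in \gelists{m}$; hence everything reduces to the key \emph{preservation lemma}: if $\listL \in \gelists{m}$, then $\next[m]{\listL} \in \gelists{m}$.

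To prove preservation, I would first observe that any list $\listL$ that is descending with elements bounded by $m$ can be written as $\app{\rep{m}{n}}{\listLi}$, where $n$ is the number of leading occurrences of $m$ and $\listLi$ is either empty or has a head $k$ with $k < m$. I then split into these two cases. In the case $\listLi = \listEmpty$, i.e.\ $\listL = \rep{m}{n}$, \Cref{lem:next_repeat} does not apply (there is no trailing element strictly below $m$), so I compute $\next[m]{}$ directly: a short induction on $n$, using $\next[m]{\listEmpty} = [0]$ for the base case and the fourth inference rule of \Cref{def:next} for the step, shows $\next[m]{\rep{m}{n}} = \rep{0}{n+1}$. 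This list is constant, hence descending, and all its elements equal $0 \le m$, so it lies in $\gelists{m}$.

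In the case $\listLi = \listCons{k}{\listT}$ with $k < m$, we have $\listL = \app{\rep{m}{n}}{(\listCons{k}{\listT})}$, and \Cref{lem:next_repeat} gives directly $\next[m]{\listL} = \app{\rep{k+1}{n+1}}{\listT}$. Boundedness is immediate: $k < m$ gives $k+1 \le m$, and the elements of $\listT$, being elements of $\listL$, are all $\le m$. For descendingness, the block $\rep{k+1}{n+1}$ is constant, $\listT$ is descending as a suffix of the descending list $\listL$, and the head of $\listT$ is $\le k < k+1$, since $\listL$ is descending and the element $k$ immediately precedes $\listT$ in $\listL$. Hence $\next[m]{\listL} \in \gelists{m}$, completing the preservation lemma and thus the induction.

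The main obstacle I expect is not any single computation — \Cref{lem:next_repeat} makes the action of $\next[m]{}$ fully explicit — but rather the structural bookkeeping around the decomposition $\listL = \app{\rep{m}{n}}{\listLi}$: isolating the all-$m$'s case (which escapes \Cref{lem:next_repeat}) and, in the general case, carefully tracking that the tail $\listT$ stays descending and that its head remains strictly below $k+1$ so that the two blocks splice together into a genuinely descending list. In the Rocq development this amounts to threading the sortedness and boundedness invariants through the list recursion, which is the part requiring the most attention.
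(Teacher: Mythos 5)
Your proposal is correct. At the top level it coincides with the paper's proof: both reduce the theorem to the preservation lemma that $\next[m]{}$ maps $\gelists{m}$ into itself, and then conclude by induction on the iteration count. Where you diverge is in the proof of preservation. The paper argues by structural induction on the list: for $\listL = \listCons{h}{\listT}$ it applies the inductive hypothesis to $\next[m]{\listT}$ and reads the result off the second or fourth inference rule of \Cref{def:next}. You instead decompose $\listL \in \gelists{m}$ as $\app{\rep{m}{n}}{\listLi}$ (maximal prefix of $m$'s), dispatch the all-$m$'s case by the direct computation $\next[m]{\rep{m}{n}} = \rep{0}{n+1}$, and otherwise invoke \Cref{lem:next_repeat} to get the closed form $\app{\rep{k+1}{n+1}}{\listT}$, whose membership in $\gelists{m}$ you then verify by hand. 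Both routes are sound. The paper's structural induction is shorter and needs no case analysis on the shape of the prefix, though it is terse about why sortedness survives the rewriting in each rule; your route makes the output of $\next[m]{}$ completely explicit (which also documents the intended behaviour of the function) and reuses \Cref{lem:next_repeat}, which the paper proves anyway for \Cref{lem:le_length_0_lgen} --- at the price of the decomposition bookkeeping that you correctly identify as the main formalization burden.
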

\begin{proof}
    This is a direct consequence of the auxiliary result:
    \begin{equation}
        \forall \listL \in \gelists{m}, \;\; \next[m]{\listL} \in \gelists{m}
        \tag*{\citeCoq[gelist_next]{list_alg.v\#L82}}
    \end{equation}
    We prove it by induction on $\listL$.
    For the base case $\listL = \listEmpty$, we have $\next[m]{\listL} = [0] \in \gelists{m}$.
    For the inductive case, let $\listL = \listCons{h}{\listT}$ be such that $\listL \in \gelists{m}$.
    Then also $\listT \in \gelists{m}$ and thus, by the inductive hypothesis, $\next[m]{\listT} \in \gelists{m}$.
    
    We have two cases.
    If $h < m$, then $\next{\listCons{h}{\listT}} = \listCons{(h+1)}{\listT}$, which is in $\gelists{m}$ since $\listCons{h}{\listT} \in \gelists{m}$ and $h+1 \leq m$.
    If $h = m$, then $\next{\listCons{h}{\listT}} = \listCons{x}{\listCons{x}{\listTi}}$, where $\listCons{x}{\listTi} = \next{\listT}$.
    Since $\next{\listT} \in \gelists{m}$, we conclude that $\next{\listCons{h}{\listT}} \in \gelists{m}$.
    \qed
\end{proof}

Proving the other inclusion, stated by~\Cref{th:lgen_complete}, is more complex.
Indeed, the idea is to prove this \namecref{th:lgen_complete} by induction, but a simple induction on the list structure does not work, since we cannot easily derive that a list $\listCons{h}{\listT}$ belongs to $\ran{\nh{m}{}}$ if its tail $\listT$ does.
We first state the~\namecref{th:lgen_complete}, and then present the  intuitions, definitions and results that are needed for its proof. 

\begin{theorem}
    \label{th:lgen_complete}
    \citeCoq[lgen_complete]{list_alg.v\#L659}
    For all $m \in \N$,
    $\gelists{m} \subseteq \ran{\nh{m}{}}$.
\end{theorem}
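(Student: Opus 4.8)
The plan is to reduce this set inclusion to a counting argument, exploiting the fact that $\nh{m}{}$ is nothing more than the iteration of the successor map $\next[m]{}$ starting from $\listEmpty$. Concretely, I would introduce an \emph{enumeration index} $\se \colon \gelists{m} \to \N$ that assigns to each decreasing list $\listL$ the number of lists of $\gelists{m}$ preceding it in the order produced by the iteration --- equivalently, an explicit combinatorial rank obtained by first grouping the lists of $\gelists{m}$ by length, and then ranking those of a fixed length as multisets with entries in $\setenum{0, \dots, m}$. The whole proof then rests on three facts: (i) $\se(\listEmpty) = 0$; (ii) the compatibility law $\se(\next[m]{\listL}) = \se(\listL) + 1$ for every $\listL \in \gelists{m}$; and (iii) $\se$ is injective on $\gelists{m}$.

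Granting these, the argument closes quickly. From (i), (ii), and \Cref{th:lgen_correct} --- which guarantees $\nh{m}{x} \in \gelists{m}$, so that (ii) is applicable at each step --- a straightforward induction on $x$ yields $\se(\nh{m}{x}) = x$ for all $x \in \N$. Now I would take an arbitrary $\listL \in \gelists{m}$ and set $x = \se(\listL)$. Again by \Cref{th:lgen_correct} we have $\nh{m}{x} \in \gelists{m}$, and by the previous equality $\se(\nh{m}{x}) = x = \se(\listL)$; hence the injectivity (iii) forces $\nh{m}{x} = \listL$, so $\listL \in \ran{\nh{m}{}}$. This is exactly the desired inclusion $\gelists{m} \subseteq \ran{\nh{m}{}}$.

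The main obstacle is fact (ii), the statement that a single application of $\next[m]{}$ advances the index by exactly one; morally, this is the claim that incrementing an odometer increases its value by one. The delicate situation is the ``carry'': when a maximal prefix $\rep{m}{n}$ of entries equal to $m$ sits in front of the first entry $k < m$, the map resets that whole prefix and increments $k$, and when the entire list is $\rep{m}{\length{\listL}}$ it appends a fresh $0$ and zeroes everything else. Here \Cref{lem:next_repeat} pins down precisely the list produced in the first case, so the remaining work is to verify that the combinatorial rank changes by $1$ across this transformation, which I expect to argue by induction on the length $n$ of the reset prefix. Fact (iii) is the other nontrivial ingredient: it amounts to showing that the rank determines the list, which I would establish by recovering the length of $\listL$ from the ``block'' in which $\se(\listL)$ falls and then reconstructing its entries one at a time, again by induction on the length. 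Facts (i) together with the base cases of (ii) and (iii) are immediate from the definitions.
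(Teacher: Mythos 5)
Your argument is correct in outline, but it takes a genuinely different route from the paper. The paper never introduces a numeric rank: it works with the tuple of counts $\mls{n}{\listL} = (\ml{0}{\listL},\dots,\ml{n}{\listL})$, defines the predicates $\all{m}{n}{k}$ (``every list with count tuple $k$ is already generated''), and proves completeness by a nested induction (\Cref{lem:le_length_0_lgen,lem:le_length_eq_impl}) showing that these predicates propagate when the last component of $k$ is incremented, the outer induction being on $\length{\listL}$. You instead posit an explicit enumeration index $\se \colon \gelists{m} \to \N$ and reduce the theorem to the successor law $\se(\next[m]{\listL}) = \se(\listL)+1$ together with injectivity; your top-level derivation ($\se(\nh{m}{x})=x$ by induction on $x$, then injectivity, with \Cref{th:lgen_correct} invoked exactly where it is needed) is clean and sound, and it buys something the paper does not state: a bijective enumeration of $\gelists{m}$ with an explicit rank formula. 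The price is that your deferred facts (ii) and (iii) are precisely where the paper's machinery lives in disguise. Two caveats. First, for the argument not to be circular you must commit to the \emph{combinatorial} definition of $\se$ (sum of the multiset counts $\binom{j+m}{m}$ over lengths $j < \length{\listL}$, plus a within-length multiset rank), not ``the number of predecessors in the iteration order'', since the latter is only defined on lists already known to be generated. Second, fact (ii) requires an adjacency check not only in the carry case (where \Cref{lem:next_repeat} and your induction on the prefix $\rep{m}{n}$ do the work) but also in the no-carry case $\listCons{h}{\listT} \mapsto \listCons{(h+1)}{\listT}$ with $h<m$, where you must still argue that no list of $\gelists{m}$ sits strictly between the two; and fact (iii) needs the binomial identities that make the rank injective. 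None of this breaks the plan, but expect (ii) and (iii) together to cost at least as much as the paper's \Cref{lem:le_length_eq_impl}, and in a formalization the binomial arithmetic is typically heavier than the paper's purely structural double induction.
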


To prove~\Cref{th:lgen_complete}, we first define an operator that gives a lexicographic ordering between the lists generated by $\nh{m}{}$.
This ordering will then be exploited in our inductive proof.
Formally, for all $n \in \N$, we define the operator $\mls{n}{\,\cdot\,} \colon \natlists \to \N^{n+1}$ as follows:
\[
    \mls{n}{\listL} 
    \quad \eqdef \quad
    \bigl(\ml{0}{\listL}, \dots, \ml{n}{\listL}\bigr)
\]
Note that the order in which $\nh{m}{}$ generates lists  corresponds to the lexicographic order associated with $\mls{m}{\listL}$, as exemplified by the following table:
\[
\begin{array}{r@{\qquad}*{4}{c}}
    \listL	& \ml{0}{\listL} & \ml{1}{\listL} & \ml{2}{\listL} & \ml{3}{\listL} \\
    \midrule
    \nh{3}{x} = [3;3;2;0]	& 4    & 3    & 3    & 2    \\
    \nh{3}{x+1} = [3;3;3;0]	& 4    & 3    & 3    & 3    \\
    \nh{3}{x+2} = [1;1;1;1]	& 4    & 4    & 0    & 0    \\
    \nh{3}{x+3} = [2;1;1;1]	& 4    & 4    & 1    & 0    \\
\end{array}
\]

Following this intuition, we prove that if all the lists $\listL$ with given values for $\ml{0}{\listL}, \dots, \ml{n}{\listL}$ are generated, then all the lists with the same given values, except for $\ml{n}{\listL}$, which is increased by one, are generated.
This is expressed by the following definition and by~\Cref{lem:le_length_eq_impl}.

\begin{definition}
    \label{lem:le_length_eq_lgen}
    \citeCoq[le_length_eq_lgen]{list_alg.v\#L405}
    For all $m, n \in \N$ and $k \in \N^{n+1}$ we define the proposition $\all{m}{n}{k}$ as follows:
    \[
        \all{m}{n}{k}
        \quad \eqdef \quad
        \forall \listL \in \gelists{m},\;\;
        \mls{n}{\listL} = k
        \rightarrow
        \listL \in \ran{\nh{m}{}}
    \]
    Moreover, we define $\fsucc{k} = (k_0, \dots, k_{n-1}, k_n+1)$.
\end{definition}

\begin{lemma}
    \label{lem:le_length_eq_impl}
    \citeCoq[le_length_eq_impl]{list_alg.v\#L601}
    For all $m, n \in \N$ and $k \in \N^{m-n+1}$, if $\all{m}{m-n}{k}$ holds then $\all{m}{m-n}{\fsucc{k}}$ holds.
\end{lemma}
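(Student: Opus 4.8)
Write $i \eqdef m-n$, so the goal becomes $\all{m}{i}{k} \rightarrow \all{m}{i}{\fsucc{k}}$, where $\fsucc{k}$ coincides with $k$ except that its last component $\ml{i}{\cdot}$ is larger by one. Unfolding the conclusion, I must prove that every $\listL \in \gelists{m}$ with $\mls{i}{\listL} = \fsucc{k}$ lies in $\ran{\nh{m}{}}$. Every such $\listL$ has a completely determined lower part: the counts $\ml{r}{\listL}$ for $r \le i-1$ are fixed by $k$, and $\listL$ contains exactly $k_i+1$ elements that are $\ge i$. Only the distribution of these top elements among the values $i, \dots, m$ is free, and that distribution is recorded precisely by the profile tail $\bigl(\ml{i+1}{\listL}, \dots, \ml{m}{\listL}\bigr)$.

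The plan is well-founded induction on this profile tail, ordered lexicographically; this is the very order in which $\nh{m}{}$ enumerates the lists that share the common prefix $\fsucc{k}$. Since $\nh{m}{x+1} = \next[m]{\nh{m}{x}}$ (\Cref{def:lgen}), every list of the form $\next[m]{\listLi}$ with $\listLi$ generated is itself generated; hence it suffices, for each $\listL$ in the family above, to exhibit a predecessor $\listLi$ with $\next[m]{\listLi} = \listL$ that is already known to be generated -- from the hypothesis $\all{m}{i}{k}$ in the base case, and from the induction hypothesis in the inductive step.

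In the base case the tail is minimal, \ie $\ml{i+1}{\listL} = \dots = \ml{m}{\listL} = 0$; then the $k_i+1$ top elements of $\listL$ are all equal to $i$, so $\listL = \app{\rep{i}{k_i+1}}{B}$ with $B$ the fixed lower part. By \Cref{lem:next_repeat}, $\listL = \next[m]{\listLi}$ for $\listLi = \app{\rep{m}{k_i}}{(\listCons{(i-1)}{B})}$ (and $\listLi = \rep{m}{k_0}$ when $i = 0$, where $B$ is empty); a short telescoping computation of the counts $\ml{r}{\listLi}$ gives $\mls{i}{\listLi} = k$, so $\listLi$ is generated by the hypothesis. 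In the inductive step the tail is not minimal, so the head $h$ of $\listL$ satisfies $h \ge i+1$; writing the leading run of $h$'s as $\rep{h}{q}$ so that $\listL = \app{\rep{h}{q}}{\listL[1]}$ with every element of $\listL[1]$ below $h$, I set $\listLi = \app{\rep{m}{q-1}}{(\listCons{(h-1)}{\listL[1]})}$, and \Cref{lem:next_repeat} again yields $\next[m]{\listLi} = \listL$. Since $h-1 \ge i$, this step leaves all counts at indices $\le i$ unchanged, so $\mls{i}{\listLi} = \fsucc{k}$, while the first count it alters (at index $h$, inside the tail) strictly increases; thus $\listLi$ has a lexicographically smaller tail and is generated by the induction hypothesis.

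The main obstacle is the profile bookkeeping underlying both constructions: one must track exactly how a single $\next[m]{}$ step -- as captured by \Cref{lem:next_repeat} -- redistributes the counts $\ml{r}{\cdot}$, so as to certify (a) that the predecessor $\listLi$ still has profile $\fsucc{k}$ up to index $i$, keeping the induction inside the family, and (b) that $\next[m]{}$ strictly decreases the chosen measure. Setting up the well-founded lexicographic order on the (bounded) profile tails and pushing this accounting through the Rocq development is where the real effort lies.
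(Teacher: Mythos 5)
Your proof is correct, and it rests on the same two ingredients as the paper's: the explicit predecessor construction via \Cref{lem:next_repeat} when the profile tail $\bigl(\ml{i+1}{\listL},\dots,\ml{m}{\listL}\bigr)$ vanishes (this is exactly the paper's \Cref{lem:le_length_0_lgen}, including the separate treatment of $i=0$), and a lexicographic descent on that tail. Where you differ is in how the descent is organized. You run a single well-founded induction on the lex-ordered tail, which forces you to exhibit an explicit $\next[m]{}$-predecessor at \emph{every} step: in the non-minimal case you peel off the leading run $\rep{h}{q}$ and take $\listLi = \app{\rep{m}{q-1}}{(\listCons{(h-1)}{\cdot})}$, checking that the prefix $\mls{i}{\cdot}$ is preserved and the tail strictly drops. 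The paper instead packages the lex induction as two nested structural inductions --- an outer one on $n$ (the length of the unfixed part of the profile) and an inner one on the single component $x = \ml{a+1}{\listL}$ --- so that the inductive step never touches a concrete list: it merely chains the two induction hypotheses at the level of the propositions $\all{m}{\cdot}{\cdot}$, and the only place a predecessor is built is the base case. Your version is conceptually more direct but pays for it twice: you must set up and justify a well-founded lexicographic order on bounded tuples (nontrivial boilerplate in Rocq), and you must redo the profile bookkeeping for the inductive-step predecessor, which the paper's proposition-level chaining gets for free. One wording slip worth fixing: you say the first altered count ``strictly increases'' --- that is true of the step $\listLi \mapsto \next[m]{\listLi} = \listL$, but the claim you actually need, and which holds, is that $\ml{h}{\listLi} = q-1 < q = \ml{h}{\listL}$ while the counts at indices $i+1,\dots,h-1$ agree, so that $\listLi$'s tail is lexicographically smaller; stating it in that direction would avoid ambiguity.
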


As a preliminary, we prove a special case.

\begin{lemma}
    \label{lem:le_length_0_lgen}
    \citeCoq[le_length_0_lgen]{list_alg.v\#L437}
    Let $m, n \in N$ and $k \in \N^{n+1}$ be such that $n \le m$ and $\all{m}{n}{k}$ holds.
    If $\mls{n}{\listL} = \fsucc{k}$ and $\ml{n+1}{\listL} = 0$ then $\listL \in \ran{\nh{m}{}}$.
\end{lemma}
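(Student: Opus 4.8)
The plan is to produce, for the given list $\listL \in \gelists{m}$, a \emph{preimage} $\listT \in \gelists{m}$ under the map $\next[m]{}$ satisfying $\mls{n}{\listT} = k$, and then to close the argument in two moves. First, since $\mls{n}{\listT} = k$, the hypothesis $\all{m}{n}{k}$ yields $\listT \in \ran{\nh{m}{}}$. Second, $\ran{\nh{m}{}}$ is closed under $\next[m]{}$: if $\listT = \nh{m}{x}$ then $\next[m]{\listT} = \nh{m}{x+1} \in \ran{\nh{m}{}}$ by \Cref{def:lgen}. Hence $\listL = \next[m]{\listT} \in \ran{\nh{m}{}}$. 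We may assume $\listL \in \gelists{m}$, since otherwise the conclusion cannot hold: $\ran{\nh{m}{}} \subseteq \gelists{m}$ by \Cref{th:lgen_correct}.

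First I would read off the shape of $\listL$ from the two hypotheses. The equalities $\ml{n}{\listL} = k_n + 1$ and $\ml{n+1}{\listL} = 0$ say that $\listL$ contains exactly $k_n + 1$ entries equal to $n$ and no entry larger than $n$; as $\listL$ is sorted in descending order, these copies sit at its head, so $\listL = \app{\rep{n}{k_n+1}}{\listLi}$ for some $\listLi \in \gelists{m}$ all of whose entries are strictly less than $n$. Assuming $n \ge 1$, I would take the preimage to be $\listT = \app{\rep{m}{k_n}}{(\listCons{(n-1)}{\listLi})}$. Applying \Cref{lem:next_repeat} with repeat-count $k_n$, distinguished element $n-1$ (which is $< m$ because $n \le m$), and tail $\listLi$ gives $\next[m]{\listT} = \app{\rep{(n-1)+1}{k_n+1}}{\listLi} = \app{\rep{n}{k_n+1}}{\listLi} = \listL$, as required.

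It then remains to verify $\listT \in \gelists{m}$ and $\mls{n}{\listT} = k$. Membership is immediate: the block $\rep{m}{k_n}$ is followed by $n-1 < m$ and then by $\listLi$, whose entries are $\le n - 1$, so $\listT$ is non-increasing with all entries $\le m$. For the counts, comparing $\listT$ with $\listL$ shows that for every $i \le n - 1$ the two lists have the same number of entries $\ge i$: in $\listL$ the $k_n + 1$ copies of $n$ are counted, while in $\listT$ the $k_n$ copies of $m$ together with the single $n-1$ contribute exactly the same, the tail $\listLi$ being shared; thus $\ml{i}{\listT} = \ml{i}{\listL} = k_i$. For $i = n$, only the $k_n$ copies of $m$ in $\listT$ are $\ge n$, so $\ml{n}{\listT} = k_n$. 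Hence $\mls{n}{\listT} = (k_0, \dots, k_{n-1}, k_n) = k$, and the two-move argument of the first paragraph concludes.

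The point that needs care, and which I expect to be the main obstacle, is the degenerate case $n = 0$: there $n - 1$ is not a natural number and \Cref{lem:next_repeat} does not apply. In that case the hypotheses force $\listL = \rep{0}{k_0 + 1}$, and the right preimage is $\listT = \rep{m}{k_0}$; one checks directly from \Cref{def:next} that $\next[m]{\rep{m}{k_0}} = \rep{0}{k_0 + 1}$ (each trailing $m \ge m$ fires the last inference rule, duplicating the head of the $[0]$ produced from the empty list), while $\mls{0}{\rep{m}{k_0}} = (k_0) = k$; the same two-move argument then applies. Apart from this boundary case, the work is just the structural decomposition of $\listL$ and the bookkeeping of the counts $\ml{i}{\cdot}$.
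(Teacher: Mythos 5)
Your proposal is correct and follows essentially the same route as the paper's proof: the same decomposition $\listL = \app{\rep{n}{k_n+1}}{\listLi}$, the same preimage $\app{\rep{m}{k_n}}{(\listCons{(n-1)}{\listLi})}$ handled via \Cref{lem:next_repeat}, and the same separate treatment of the degenerate case $n=0$. The only difference is that you spell out the bookkeeping (membership of the preimage in $\gelists{m}$ and the equalities $\ml{i}{\listT}=k_i$) that the paper leaves implicit.
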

\begin{proof}
    Since $\listL$ is sorted and $\ml{n+1}{\listL} = 0$, all the elements of $\listL$ up to the position $\ml{n}{\listL} = k_n+1$, and not more, are equal to $n$; that is, $\listL = \app{\rep{n}{k_n+1}}{\listT}$, where the elements of $\listT$ are less than $n$.
    There are two cases:
    \begin{itemize}    
    \item If $n \neq 0$, let $\listLi = \app{\rep{m}{k_n}}{\listCons{(n-1)}{\listT}}$.
    The list $\listLi$ satisfies the hypothesis of $\all{m}{n}{k}$, and thus it belongs to $\ran{\nh{m}{}}$.
    By \Cref{lem:next_repeat}, $\next[m]{\listLi} = \listL$.
    \item If $n = 0$ then $\listL = \rep{0}{k_0+1}$, which is the successor of the list $\rep{m}{k_0}$.
    This list belongs to $\ran{\nh{m}{}}$ because it satisfies the hypothesis of $\all{m}{0}{k}$.
    \qed
    \end{itemize}
\end{proof}


\begin{proof}[\Cref{lem:le_length_eq_impl}]
    By induction on $n$.
    For the base case $n = 0$, let $\listL$ satisfy the hypothesis of $\all{m}{m}{\fsucc{k}}$.
    In particular $\ml{m+1}{\listL} = 0$, since the elements of $\listL$ are all less than or equal to $m$.
    We conclude by \Cref{lem:le_length_0_lgen}.

    For the inductive case, let $a = m - n - 1$.
    The inductive hypothesis is:
    \begin{equation}
        \label{eq:all_impl}
        \forall k \in \N^{a+2},\;\; \all{m}{a+1}{k} \rightarrow \all{m}{a+1}{\fsucc{k}}
    \end{equation}
    Let $k \in \N^{a+1}$ and let $\listL$ be a list that satisfies the hypothesis of $\all{m}{a}{\fsucc{k}}$.
    Let $x = \ml{a+1}{\listL}$.
    We prove by induction on $x$ that $\listL \in \ran{\nh{m}{}}$.
    If $x = 0$ we conclude by \Cref{lem:le_length_0_lgen}.

    The inductive hypothesis is the following: for every $k \in \N^{a+1}$ and for every list $\listL$ that satisfies the hypothesis of $\all{m}{a}{\fsucc{k}}$ and such that $\ml{a+1}{\listL} = x$, we have that $\listL \in \ran{\nh{m}{}}$.
    Let $k \in \N^{a+1}$ and let $\listL$ be a list that satisfies the hypothesis of $\all{m}{a}{\fsucc{k}}$ and such that $\ml{a+1}{\listL} = x+1$.
    Let $k' = (k_0, \dots, k_a, x)$.
    By inductive hypothesis on $x$, the proposition $\all{m}{a+1}{k'}$ holds.
    Thus, by \ref{eq:all_impl} we deduce that $\all{m}{a+1}{\fsucc{k'}}$ holds.
    The list $\listL$ satisfies the hypothesis of $\all{m}{a+1}{\fsucc{k'}}$, hence $\listL \in \ran{\nh{m}{}}$.
    \qed
\end{proof}

\begin{proof}[\Cref{th:lgen_complete}]
    Let $\listL \in \gelists{m}$.
    We prove that $\listL \in \nh{m}{}$ by induction on the length of $\listL$.
    For the base case, if $\length{\listL} = 0$ then $\listL = \listEmpty = \nh{m}{} 0$.
    Suppose that every list of length $k$ belongs to $\ran{\nh{m}{}}$.
    Then, equivalently, the proposition $\all{m}{0}{k}$ holds.
    Thus, by \Cref{lem:le_length_eq_impl}, the proposition $\all{m}{0}{\fsucc{k}} = \all{m}{0}{k+1}$ holds, which is equivalent to: every list of length $k+1$ belongs to $\ran{\nh{m}{}}$.
    \qed
\end{proof}

\subsection{From multiplicity index lists to linear combinations} 
\label{sec:gen-to-linear-combinations}

We now apply our $\nh{m}{}$ algorithm to obtain all the linear combinations of the (list of) generators $\gen$ of a numerical semigroup.
We do this by introducing in~\Cref{def:mgen} a function $\nhs{\gen}{} \colon \N \rightarrow \N$ that generates the linear combinations of $\gen$.
Namely, $\nhs{\gen}{n}$ is the linear combination of elements of $\gen$ corresponding to the list $\nh{m}{n}$, where $m$ is the bound on the indices of the multiplicity index lists.
We will prove in~\Cref{th:mgen_complete} that $\nhs{\gen}{}$ iteratively generates all the  linear combinations of elements of $\gen$, and thus all possible elements of the submonoid of $\N$ generated by $\gen$.

\begin{definition}
    \label{def:mgen}
    \citeCoq[mgen]{generators.v\#L109}    
    For all $\gen \in \natlists$,
    we define $\nhs{\gen}{} \colon \N \rightarrow \N$ as:
    \[
    \nhs{\gen}{n} 
    \; =
    \sum_{i \in \nh{\length{\gen}-1}{n}} \!\!\! \gen[i]
    \]
\end{definition}

\noindent
\Eg, recalling the tabulation of $\nh{2}{}$ at page~\pageref{ex:genL},
for $\gen = [4;7;10]$, we have:
\begin{align*}
\nhs{\gen}{3} & = \sum_{i \in \nh{2}{3}} \!\!\! \gen[i] = \sum_{i \in [2]}  \gen[i] = \gen[2] = 10
\\
\nhs{\gen}{4} & = \sum_{i \in \nh{2}{4}} \!\!\! \gen[i] = \sum_{i \in [0;0]}  \gen[i] = \gen[0] + \gen[0] = 4+4 = 8
\end{align*}
which shows that $\nhs{\gen}{}$ is \emph{not} monotonic, \ie
$n \leq n' \not\rightarrow \nhs{\gen}{n} \leq \nhs{\gen}{n'}$.

\begin{theorem}
    \label{th:mgen_complete}
    \citeCoq[mgen_complete]{generators.v\#L144}
    Let $M$ be the submonoid of $\N$ generated by $\gen \in \natlists$, where $0 \not\in \gen \neq \listEmpty$.
    Then $M \subseteq \ran{\nhs{\gen}{}}$.
\end{theorem}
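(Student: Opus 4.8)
The plan is to reduce the inclusion $M \subseteq \ran{\nhs{\gen}{}}$ to \Cref{th:lgen_complete}, exploiting the correspondence between linear combinations of the generators and decreasing index lists described at the beginning of \Cref{sec:gen}. The key observation is that $\ran{\nhs{\gen}{}}$ contains the value $\sum_{i \in \listL} \gen[i]$ for \emph{every} list $\listL \in \gelists{m}$, where $m = \length{\gen}-1$: by \Cref{th:lgen_complete} each such $\listL$ equals $\nh{m}{n}$ for some $n \in \N$, and then by \Cref{def:mgen} that value is exactly $\nhs{\gen}{n}$.

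Concretely, I would take an arbitrary $d \in M$. Since $M$ is the submonoid generated by $\gen$, by definition of the generated submonoid every $d \in M$ can be written as a finite linear combination $d = \sum_j \lambda_j a_j$ with each $a_j$ an element of $\gen$ and each $\lambda_j \in \N$. Writing each $a_j$ as $\gen[i_j]$ for a suitable index $i_j < \length{\gen}$, this exhibits $d$ as $\sum_i \gen[i]$ ranging over the multiset of indices in which each $i_j$ occurs $\lambda_j$ times.

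Next I would materialize this multiset as a list. Sorting its indices in decreasing order produces a list $\listL$ that is decreasing and has all entries $\le m$, i.e.\ $\listL \in \gelists{m}$; since addition on $\N$ is commutative and associative, reordering the summands leaves the total unchanged, so $\sum_{i \in \listL} \gen[i] = d$. Applying \Cref{th:lgen_complete} to $\listL$ yields an $n \in \N$ with $\nh{m}{n} = \listL$, and then \Cref{def:mgen} gives $\nhs{\gen}{n} = \sum_{i \in \nh{m}{n}} \gen[i] = \sum_{i \in \listL} \gen[i] = d$. Hence $d \in \ran{\nhs{\gen}{}}$, which concludes the proof.

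I expect the main obstacle to lie in the bookkeeping of the correspondence between linear combinations and decreasing index lists. Depending on how the generated submonoid is defined in Rocq, one either proceeds by induction on the derivation that $d \in M$ --- constructing the index list incrementally while maintaining the invariant that its associated sum equals the element built so far --- or constructs the sorted index list explicitly and proves that permuting the order of summation preserves the sum. Both routes are routine but fiddly, as they require relating the list-indexed summation $\sum_{i \in \listL} \gen[i]$ to the coefficient-based form $\sum_j \lambda_j a_j$. By contrast, the hypothesis $\gen \neq \listEmpty$ is used only to make $m = \length{\gen}-1$ a meaningful bound on the indices, while the hypothesis $0 \notin \gen$ is not actually needed for this particular inclusion.
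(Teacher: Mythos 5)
Your proof is correct and follows essentially the same route as the paper's: write $d$ as a linear combination of the generators, encode it as a descending multiplicity index list in $\gelists{m}$, and apply \Cref{th:lgen_complete} together with \Cref{def:mgen} to exhibit $d$ as $\nhs{\gen}{n}$ for some $n$. Your side remarks on the roles of the hypotheses $\gen \neq \listEmpty$ and $0 \notin \gen$ are accurate but do not change the argument.
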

\begin{proof}
    Let $d \in M$.
    Since $M$ is generated by $\gen$, its element $d$ can be written as a linear combination of $\gen$, \ie, there exist $\lambda_1, \ldots, \lambda_n \in \N$ 
    and $x_1,\ldots,x_n \in \gen$ such that $d = \sum_1^n \lambda_i x_i$.
    For $i \in \setenum{1,\ldots,n}$, let $p_i$ be the index of $x_i$ in $\gen$, starting from $0$.
    Let $\listL$ be the list obtained by sorting the list
    $[p_1]^{\lambda_1} \listConcat \dots [p_n]^{\lambda_n}$
    in descending order. 
    Let $m = \length{\gen}-1$.
    Since $\listL \in \gelists{m}$, by~\Cref{th:lgen_complete} there exists $k \in \N$ such that $\nh{m}{k} = \listL$.
    From the definition of $\listL$ it follows that
    \[
    \nhs{\gen}{k} 
    \; = \; 
    \sum_{i \in \listL} \gen[i] 
    \; = \; 
    d 
    \tag*{\qed}
    \]
\end{proof}


Since $\nhs{\gen}{}$ does not generate the elements of $M$ in ascending order, we cannot just obtain the small elements of $M$ by applying $\nhs{\gen}{}$ until finding the conductor.
%
%
Still, our algorithm $\nhs{\gen}{}$ satisfies the following property: there exist certain numbers $n$ (and they are infinite) such that, when the algorithm has generated $n$, we know that all possible linear combinations $d \leq n$ have already been generated.
Thus, we just need to run the algorithm until one of those numbers that is greater than or equal to the conductor has been generated.

\subsection{From linear combinations to small elements} 
\label{sec:linear-combinations-to-small-elements}

We present our algorithm to compute the list of small elements of a numerical semigroup in \Cref{algo:small-elements}.
For simplicity, there we rely on pseudocode; see \citeCoq[small_els]{generators.v\#L392} for our actual Rocq implementation.

\begin{algorithm}
\footnotesize
\caption{Algorithm for computing the small elements given the generators}
\label{algo:small-elements}
\begin{algorithmic}
    \Function{Consecutive-Values}{$\listL,m, l$}
    \Comment{Search a subsequence of length $m$}
        \State $\mathit{next} \gets 0; \quad \mathit{count} \gets 0; \quad i \gets 0;$
        \While{$i < \length{\listL}$}
            \State \textbf{if} $\listL[i] = \mathit{next}$
                \textbf{then} $\mathit{count} \gets \mathit{count}+1;$
            \State \textbf{else}
            (\textbf{if} $\listL[i] \le l$ \textbf{then} $\mathit{count} \gets 1$ \textbf{else} \Return $-1;$)
            \State \textbf{if} $\mathit{count} = m$ \textbf{then} \Return $i+1-m;$
            \Comment{Index of 1st element of subsequence}
            \State $\mathit{next} \gets \listL[i]+1; \quad i \gets i+1;$
        \EndWhile
        \State \Return $-1;$
    \EndFunction

    \Statex

    \Function{Small-Elements}{$\gen,n$}
        \State $\vec{se} = \listEmpty; \quad \vec{l} \gets \listEmpty; \quad m \gets \min \gen; \quad i \gets 1;$
        \While{$i < n$}
            \While{$\length{\vec{l}} < i$}
                \State $\vec{se} \gets \listCons{\big(\sum_{j \in \vec{l}} \gen[j]\big)}{\vec{se}};$
                \State $\vec{l} \gets \next[\length{\gen}-1]{\vec{l}};$
            \EndWhile
            \State $\vec{se} \gets \listCons{(i \cdot m)}{\vec{se}};$
            \State $\vec{l} \gets \next[\length{\gen}-1]{\vec{l}};$
            \State $\vec{se} \gets \textsc{Sort}(\vec{se});$
                \Comment {Sort in ascending order}
            \State $\vec{se} \gets \textsc{RmDups}(\vec{se});$
                \Comment{Remove duplicates}
            \State $p \gets \textsc{Consecutive-Values}(\vec{se},m,i \cdot m);$
            \State \textbf{if} $p \geq 0$ \textbf{then} \Return $\textsc{Take}(\vec{se}, p+1);$
                \Comment {Prefix of $\vec{se}$ of length $p+1$}
            \State $i \gets i+1;$
        \EndWhile
    \EndFunction
\end{algorithmic}
\end{algorithm}

The function \textsc{Consecutive-Values}$(\listL,m,l)$ \citeCoq[find_seq]{list_nat.v\#L414} returns the index of the first element $x \leq l$ in $\listL$ such that there is a sequence of $m$ consecutive natural numbers starting from $x$ in $\listL$.
The return value $-1$ means that no such sequence exists. 

Let $\gen \in \natlists$ with $0 \notin \gen \neq \listEmpty$ be the generator of the numerical semigroup $M$.
The function \textsc{Small-Elements} starts with an empty list $\vec{se}$ and with $i = 1$.
In each iteration inside the outer \textbf{while} loop, we add the element $\nhs{\gen}{j}$ to the list $\vec{se}$, for every $j \in \N$ such that $\length{\nh{\length{\gen}-1}{j}} = i-1$ and for the smallest $j \in \N$ such that $\length{\nh{\length{\gen}-1}{j}} = i$. In particular, this last element is equal to $i \cdot m$, where $m = \min \gen$.
Next, we sort $\vec{se}$ in ascending order and we remove any duplicates from it.
Lastly, we apply the function \textsc{Consecutive-Values} to the list $\vec{se}$.
If the error value $-1$ is returned, we proceed with another iteration.
Otherwise, let $p$ be the returned value.
The main function \textsc{Small-Elements} returns the list made of the first $p+1$ elements of $\vec{se}$.

\begin{example}
\label{ex:small-elements}
We illustrate~\Cref{algo:small-elements} through an example with $\gen = [4;7;10]$.
At the 3rd iteration of the outer \textbf{while} loop, $\vec{se} = [0;4;7;8;10;11;14;17;20]$.
Since there are no $4$ consecutive elements in this list, we proceed with another iteration.
At the 4th iteration, $\vec{se} = [0;4;7;8;10;11;12;14;15;16;17;18;20;21;24;27;30]$.
The number $14$ is the first of a sequence of $4$ consecutive numbers in $\vec{se}$, and $14 \le 16 = i \cdot m$.
Thus, the list of small elements is $[0;4;7;8;10;11;12;14]$.
Consequently, the list of gaps is $[1;2;3;5;6;9;13]$.
\hfill\qedex
\end{example}

We now prove the correctness of the algorithm, namely that the prefix of $\vec{se}$ of length $p+1$ is the list of small elements of $M$.
Clearly, $\vec{se} \subseteq M$, since every element added to $\vec{se}$ during the execution of the algorithm is a sum of elements in $\gen$.
Next, every natural number $x$ greater than or equal to $\vec{se}_p$ belongs to $M$.
Indeed, $x$ can be obtained by adding $m$ multiple times to one of $c, \dots, c + m - 1$.
Since $c, \dots, c+m-1$ are in $\vec{se}$ and thus in $M$, and since $m \in M$, also $x \in M$.
Moreover, $c$ is the smallest element of $M$ with this property, since $c-1 \notin M$ (otherwise, \textsc{Consecutive-Values} would have returned the position of $c-1$ instead of the position of $c$).

Finally, we have to prove that every element of $M$ that is smaller than $\vec{se}_p$ is in $\vec{se}$.
This is a consequence of the following theorem.

\begin{theorem}
    \label{th:mgen_complete_lt}
    \citeCoq[mgen_complete_lt]{generators.v\#L182}
    Let $\gen \in \natlists$ be such that $0 \not\in \gen \neq \listEmpty$ and let $M$ be the submonoid of $\N$ generated by $\gen$.
    Let $k \in \N$ and let $x \in M$ be such that $x < \length{\nh{\length{\gen}-1}{k}}\cdot \min \gen$. 
    Then there exists $n < k$ such that $\nhs{\gen}{n} = x$.
\end{theorem}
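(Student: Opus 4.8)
The plan is to combine two facts about the enumeration $\nh{m}{\cdot}$, where $m = \length{\gen}-1$: a lower bound expressing that $\nhs{\gen}{n}$ is at least the length of its index list times $\min\gen$, and the observation that $\nh{m}{\cdot}$ produces index lists in order of non-decreasing length. I first record that $\min\gen \ge 1$: the list $\gen$ is non-empty, so $\min\gen$ is defined, and since $0 \notin \gen$ every element of $\gen$ is at least $1$. This strict positivity is exactly what will let me cancel the factor $\min\gen$ at the end.

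For the lower bound I would prove
\[
  \nhs{\gen}{n} \;\ge\; \length{\nh{m}{n}} \cdot \min\gen
  \qquad\text{for all } n \in \N,
\]
which is immediate from \Cref{def:mgen}: $\nhs{\gen}{n} = \sum_{i \in \nh{m}{n}} \gen[i]$ is a sum of $\length{\nh{m}{n}}$ generators, each at least $\min\gen$. For the monotonicity of the length I would first establish the single-step inequality $\length{\listL} \le \length{\next[m]{\listL}}$ by induction on $\listL$. The base case sends $\listEmpty$ to $[0]$; the case $\listCons{h}{\listT}$ with $h < m$ preserves the length; and in the case $h \ge m$ (where, as noted after \Cref{def:next}, $\next[m]{\listT}$ is never empty, so the fourth rule applies) we obtain $\next[m]{\listCons{h}{\listT}} = \listCons{x}{\listCons{x}{\listTi}}$ from $\next[m]{\listT} = \listCons{x}{\listTi}$, whence $\length{\next[m]{\listCons{h}{\listT}}} = 1 + \length{\next[m]{\listT}} \ge 1 + \length{\listT} = \length{\listCons{h}{\listT}}$ by the inductive hypothesis. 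Iterating this along \Cref{def:lgen} shows that $a \le b$ implies $\length{\nh{m}{a}} \le \length{\nh{m}{b}}$; contrapositively, $\length{\nh{m}{n}} < \length{\nh{m}{k}}$ forces $n < k$.

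The main argument then closes in a few lines. Because $x \in M$ and $0 \notin \gen \neq \listEmpty$, \Cref{th:mgen_complete} yields an $n \in \N$ with $\nhs{\gen}{n} = x$. The lower bound gives $\length{\nh{m}{n}} \cdot \min\gen \le x$, while the hypothesis on $x$ gives $x < \length{\nh{m}{k}} \cdot \min\gen$; chaining these and cancelling the positive factor $\min\gen$ yields $\length{\nh{m}{n}} < \length{\nh{m}{k}}$, so by the monotonicity of the length $n < k$, as required. I expect the only genuinely non-routine step to be the single-step length lemma, and within it the recursive case $h \ge m$ of $\next$; the remaining manipulations are direct arithmetic on natural numbers.
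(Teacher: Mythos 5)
Your proposal is correct and follows essentially the same route as the paper: the same lower bound $\nhs{\gen}{n} \ge \length{\nh{m}{n}} \cdot \min \gen$, the same single-step length lemma for $\next[m]{}$ iterated to get monotonicity of $\length{\nh{m}{\cdot}}$, and the same appeal to \Cref{th:mgen_complete}. The only cosmetic difference is that you conclude directly by cancelling the positive factor $\min\gen$ and taking the contrapositive of length monotonicity, whereas the paper assumes $k \le n$ and derives the contradiction $x < x$.
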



The \namecref{th:mgen_complete_lt} above states that all the elements of $M$ that are less than $\length{\nhs{\gen}{k}} \cdot \min \gen$ are generated before the $k$-th iteration of $\nhs{\gen}{}$.
In other words, after generating $k$ elements of $M$ with $\nhs{\gen}{}$, we are sure that no elements of $M$ smaller than $\length{\nhs{\gen}{k}} \cdot m$ have been forgotten.

For example, recalling the enumeration of $\nh{2}{}$ at page~\pageref{ex:genL}, we have that $\nh{2}{5} = [1;0]$, and thus, for $\gen = [4;7;10]$:
\[
    \length{\nhs{\gen}{5}} \cdot \min \gen
    \; = \; 
    \length{[1;0]} \cdot \min [4;7;10] \; = \; 2 \cdot 4 = 8
\]
Every list generated for $n \ge 5$ has at least length $2$, and thus the corresponding linear combination is greater than or equal to $8$.
This means that every number smaller than $8$ that is in $M$ is necessarily included in the set:
\[
    \setcomp{\nhs{\gen}{i}}{0 \leq i < 5} = \Set{0, 4, 7, 10, 8}
\]

\begin{proof}[\Cref{th:mgen_complete_lt}]
    The proof relies on two auxiliary facts.
    First, we prove that:
    \begin{equation*}
        \label{eq:nth_limit_le_mgen}
        \nhs{\gen}{k} \ge \length{\nh{\length{\gen}-1}{k}} \cdot \min \gen
        \tag*{\citeCoq[nth_limit_le_mgen]{generators.v\#L113}}
    \end{equation*}
    This holds because the term $\nhs{\gen}{k}$ is the sum of $\length{\nh{\length{\gen}-1}{k}}$ elements, each of which is greater than or equal to $\min \gen$.
    \hfill\qedex

    \smallskip\noindent
    Second, we prove that:
    \begin{equation*}
        \label{eq:length_lgen_le}
        \forall m, n, k \in \N, \;
        n \le k
        \; \rightarrow \;
        \length{\nh{m}{n}} \le \length{\nh{m}{k}}
        \tag*{\citeCoq[length_lgen_le]{list_alg.v\#L238}}
    \end{equation*}
    This is a direct consequence of the following auxiliary result:
    \begin{equation*}
    \forall \listL \in \natlists, \; 
    \length{\next[m]{\listL}} \in \setenum{\length{\listL}, \length{\listL}+1}
    \tag*{\citeCoq[length_next]{list_alg.v\#L89}}
    \end{equation*}
    We prove it by induction on $\listL$.
    For the base case, if $\listL = \listEmpty$, 
    then $\length{\next[m]{\listL}} = \length{[0]} = \length{\listL}+1$.
    For the inductive case, let $\listL = \listCons{h}{\listT}$. 
    We have two subcases.
    If $h < m$, then 
    \(
    \length{\next[m]{\listCons{h}{\listT}}} 
    = 
    \length{\listCons{(h+1)}{\listT}}
    =
    \length{\listCons{h}{\listT}}
    \).
    Otherwise, if $h \ge m$, then 
    $\next[m]{\listCons{h}{\listT}} = \listCons{x}{\listCons{x}{\listTi}}$, where $\listCons{x}{\listTi} = \next[m]{\listT}$.
    By the induction hypothesis,
    $\length{\listCons{x}{\listTi}} = \length{\next[m]{\listT}} \in \setenum{\length{\listT}, \length{\listT}+1}$.
    Therefore:
    \begin{align*}
    \length{\next[m]{\listL}} 
    = 
    \length{\listCons{x}{\listCons{x}{\listTi}}}
    =
    1 + \length{\listCons{x}{\listTi}}
    & \in
    1 + \setenum{\length{\listT},\length{\listT}+1} 
    = 
    \setenum{\length{\listL},\length{\listL}+1}
    \tag*{\qedex}
    \end{align*}

    \smallskip\noindent
    We can now prove~\Cref{th:mgen_complete_lt}.
    By~\Cref{th:mgen_complete}, there exists $n \in \N$ such that $\nhs{\gen}{n} = x$.
    Suppose by contradiction that $k \le n$.
    Let $\genlength = \length{\gen}-1$.
    Then, by \ref{eq:length_lgen_le} and \ref{eq:nth_limit_le_mgen} we have
    \[
    x < \length{\nh{\genlength}{k}} \cdot \min \gen
    \le 
    \length{\nh{\genlength}{n}} \cdot \min \gen
    \le \nhs{\gen}{n} = x
    \]
    that is, $x < x$ --- contradiction.
    \qed
\end{proof}

\section{Conclusions}

The present paper provides a first formalization of numerical semigroups in a proof assistant, more precisely Rocq.
In particular, we implemented certified algorithms to compute relevant invariants of a numerical semigroup, \eg its Apéry sets, the small elements, the list of gaps, the multiplicity and the conductor.
Although much more has to be done, this work provides a necessary basis for any further implementation of certified algorithms (and proofs) for numerical semigroups.  
Our formalization is public and counts $\sim 3000$ lines of Rocq code.

While developing our formalization in Rocq, we noted that it is not always convenient to have a perfect match between a mathematical theory and its formalization in a proof assistant.
For instance, in informal mathematics we use $\N$ and $\Z$ interchangeably, while this cannot be easily done in Rocq. For this reason we only work with $\N$, and this choice entails more adaptations, like the definition of Apéry set (\ref{eq:apery-nat}).
Another difference between the informal theory of numerical semigroups and our implementation is that we use lists to represent finite sets.



An algorithm, different from ours, that computes the gaps of a numerical semigroup from a (finite) set of generators is introduced in~\cite{algonumsgps} in the more general framework of generalized numerical semigroups. More specifically, the algorithm computes the Frobenius number $F$ at first, and then, for every $0 \leq n\leq  F$, it checks whether $n$ can be written as linear combination of the generators.
These tasks can be seen as instances of the ``Frobenius problem'', or ``money-changing problem'', which seeks for non-negative integer solutions to $x_1a_1 + \dots + x_n a_n = m$, where $a_i$ and $m$ are positive integers (see \cite{FrobProblem}), for whose solution several algorithms can be applied~\cite{frobenius_problem}. It is worth noticing that since the mentioned procedure requires to check whether each number between $0$ and $F$ is a linear combination (of the generators) or not, it makes sense to tackle the ``reverse'' problem, namely to generate every linear combination instead, which is the first step of our algorithm.

Although the efficiency of our algorithm was not investigated here, its formal proof of correctness is relevant not only intrinsically but also as a benchmark for more efficient algorithms (to be developed in the future), whose correctness can be established by proving their equivalence to our algorithm.


\bibliographystyle{splncs04}
\bibliography{main}

\begin{thebibliography}{10}
\providecommand{\url}[1]{\texttt{#1}}
\providecommand{\urlprefix}{URL }
\providecommand{\doi}[1]{https://doi.org/#1}

\bibitem{GAP}
The {GAP}-{G}roup, {GAP} – {G}roups, {A}lgorithms, and {P}rogramming, version
  4.12.2, (2022), https://www.gap-system.org.

\bibitem{numsgps}
Assi, A., D'Anna, M., Garc\'ia-S\'anchez, P.A.: Numerical Semigroups and
  Applications. Springer (2020)

\bibitem{Barucci}
Barucci, V.: Decompositions of ideals into irreducible ideals in numerical
  semigroups. Journal of Commutative Algebra  \textbf{2}(3) (2010).
  \doi{10.1216/JCA-2010-2-3-281}

\bibitem{frobenius_problem}
Beihoffer, D., Hendry, J., Nijenhuis, A., Wagon, S.: Faster algorithms for
  {Frobenius} numbers. The Electronic Journal of Combinatorics  \textbf{12}
  (2005). \doi{10.37236/1924}

\bibitem{Counting}
Blanco, V., García-Sánchez, P.A., Puerto, J.: Counting numerical semigroups
  with short generating functions. International Journal of Algebra and
  Computation  \textbf{21}(07),  1217--1235 (2011).
  \doi{10.1142/S0218196711006911},
  \url{https://doi.org/10.1142/S0218196711006911}

\bibitem{Campillo}
Campillo, A., Delgado, F., Gusein-Zade, S.M.: On generators of the semigroup of
  a plane curve singularity. Journal of the London Mathematical Society
  \textbf{60}(2),  420--430 (10 1999). \doi{10.1112/S0024610799007917},
  \url{https://doi.org/10.1112/S0024610799007917}

\bibitem{algonumsgps}
Cisto, C., Delgado, M., Garc\'{\i}a-S\'{a}nchez, P.A.: Algorithms for
  generalized numerical semigroups. Journal of Algebra and Its Applications
  \textbf{20}(05) (2021). \doi{10.1142/S0219498821500791}

\bibitem{GAPPackageNumSem}
Delgado, M., García-Sánchez, P.A., Morais, J.: Numerical{S}gps, {A} package
  for numerical semigroups, version 1.4.0 (2024), {R}efereed {GAP} package,
  \url{https://gap-packages.github.io/numericalsgps}

\bibitem{PedroCodes}
Delgado, M., Farrán, J.I., García-Sánchez, P.A., Llena, D.: On the weight
  hierarchy of codes coming from semigroups with two generators. IEEE
  Transactions on Information Theory  \textbf{60}(1),  282--295 (2014).
  \doi{10.1109/TIT.2013.2285217}

\bibitem{Garcia1982}
García, A.: Semigroups associated to singular points of plane curves. Journal
  für die reine und angewandte Mathematik  \textbf{336},  165--184 (1982),
  \url{http://eudml.org/doc/152474}

\bibitem{FrobProblem}
Ramírez-Alfonsín, J.L.: The Diophantine Frobenius Problem. Oxford University
  Press (2005)

\bibitem{numsgps2}
Rosales, J.C., Garc\'ia-S\'anchez, P.A.: Numerical Semigroups. Springer (2009)

\bibitem{Oversemigroups}
Rosales, J.C., García-Sánchez, P.A., Garc{\'\i}a-Garc{\'\i}a, J.I.,
  Jim{\'e}nez~Madrid, J.A.: The oversemigroups of a numerical semigroup.
  Semigroup Forum  \textbf{67}(1),  145--158 (2003),
  \url{https://doi.org/10.1007/s00233-002-0007-3}

\bibitem{Fundamentalgaps}
Rosales, J.C., García-Sánchez, P.A., García-García, J.I., Jiménez-Madrid,
  J.A.: Fundamental gaps in numerical semigroups. Journal of Pure and Applied
  Algebra  \textbf{189}(1),  301--313 (2004).
  \doi{https://doi.org/10.1016/j.jpaa.2003.10.024},
  \url{https://www.sciencedirect.com/science/article/pii/S0022404903002500}

\end{thebibliography}

\end{document}